\renewcommand\footnotetextcopyrightpermission[1]{} 
\begin{document}
\title{Cost Sharing under Private Costs and Connection Control on Directed Acyclic Graphs}
%
%

\author{Tianyi Zhang, Dengji Zhao, Junyu Zhang, and Sizhe Gu}
\affiliation{%
  \institution{ShanghaiTech University}
  \city{Shanghai}
  \country{China}}
\email{{zhangty, zhaodj, zhangjy22022, guszh}@shanghaitech.edu.cn}





%
%
\begin{abstract}
We consider a cost sharing problem on a weighted directed acyclic graph (DAG) with a source node to which all the other nodes want to connect. The cost (weight) of each edge is private information reported by multiple contractors, and among them, only one contractor is selected as the builder. All the nodes except for the source need to share the total cost of the used edges. However, they may block others' connections to the source by strategically cutting their outgoing edges to reduce their cost share, which may increase the total cost of connectivity. To minimize the total cost of connectivity, we design a cost sharing mechanism to incentivize each node to offer all its outgoing edges and each contractor to report all the edges' weights truthfully, and show the properties of the proposed mechanism. In addition, our mechanism outperforms the two benchmark mechanisms.
\end{abstract}

\keywords{Cost sharing, Mechanism design, Directed acyclic graphs, Truthfulness}

\maketitle    

\section{Introduction}
\label{introo}
In the classic cost sharing problem, there is a group of agents at different locations and a source~\cite{bergantinos2010minimum,gomez2011merge,trudeau2017set}. All the agents want to connect with the source via the connections between the locations. Each connection has a public cost. The goal is to share the total cost of connectivity among all the agents~\cite{claus1973cost,granot1981minimum,lorenzo2009characterization}. This problem exists in many real-world applications such as electricity networks and communication networks, and many solutions have been proposed to satisfy different properties~\cite{bergantinos2007fair,hougaard2010decentralized,trudeau2012new,10.1007/978-3-662-54110-4_25}. 

In many practical scenarios, the connections are directed rather than undirected~\cite{feigenbaum2001sharing,penna2004sharing}. Meanwhile, there exist two natural strategic behaviors. First, to connect to the source, an agent may need to go through some intermediate agents. These agents will strategically block the connection if their cost share is reduced by doing so, which will potentially increase the total cost of connectivity. Second, building the connections between agents needs cost. Assume that multiple contractors compete to build the connections, and among them, only one contractor is selected as the builder. The costs are private information of contractors and need to be reported by them. However, they may misreport the costs for their interests. 

As an example, consider a water supply network. The source supplies water to villages at different positions possibly through other villages. The transmission of water from the source to a village or from one village to another needs a cost. Each cost is reported by multiple contractors. All the villages need to share the total cost of transmissions. To reduce the cost share, a village may block the transmissions from itself to other villages, which will potentially increase the total cost of transmissions. In addition, a contractor may misreport transmission costs to increase its utility. 

As another example, consider a multicast transmission network where the source is the owner of some data and each agent is a receiver of the data. The source transmits the data to the receivers. When a receiver receives the data, it duplicates and transmits the data to its adjacent receivers. Each cost of transmission is reported by multiple contractors and all the receivers need to share the total cost of the transmissions. However, a receiver may strategically block the transmissions of the data to reduce its cost share. Additionally, a contractor may misreport transmission costs to increase its utility.

On general networks, to minimize the total cost of connectivity, we design a cost sharing mechanism that can prevent the above two strategic behaviors. One challenge is how to deal with the conflict between the effort of mechanism designer to use all the connections to minimize the total cost and the motivation of agents to block the connections to reduce their cost share. This conflict is essentially the one between the system's optimality and the individuals' self-interests. Another challenge comes from the interaction between the behaviors of agents and contractors. Specifically, the decisions of agents to block connections affect the structure of the network, which further affect the reports of contractors. Conversely, the reports of contractors affect the decisions of agents to block connections.

To meet these challenges, we use a method inspired by the second price auction~\cite{vickrey1961counterspeculation,groves1973incentives} to select a contractor. Then according to the report of the selected contractor, we compute each agent's cost share based on its shortest path from the source to it. 

\emph{Our contribution:} We first model a typical scenario in real applications involving both agents and contractors that have strategic behaviors respectively. We then show that the cost sharing benchmark mechanisms based on the well-known Shapley value~\cite{shapley1953value} and Bird rule~\cite{bird1976cost} cannot satisfy truthfulness (i.e. each agent is incentivized to offer all its connections and each contractor is incentivized to truthfully report the costs of connections). Next, we design a novel truthful cost sharing mechanism. Finally, we show that the proposed mechanism also satisfies other desirable properties studied in the literature~\cite{gomez2017monotonic,norde2019degree,todo2020split}, i.e. budget balance, positiveness, individual rationality, ranking and symmetry.

\section{Related Work}
\label{relat}
There exists a rich literature on the cost sharing problem and they all modeled the problem as an undirected graph. We survey them from the following perspectives.

Some studies treated the problem from the perspective of non-cooperative game. Berganti{\~n}os and Lorenzo~\cite{bergantinos2004non} studied the Nash equilibrium of the problem. Tijs and Driessen~\cite{tijs1986game} proposed the cost gap allocation (CGA) method based on marginal contributions of agents, but it only applies to complete graphs. Bird~\cite{bird1976cost}, Dutta and Kar~\cite{dutta2004cost}, Norde \textit{et al.}~\cite{norde2004minimum}, Tijs \textit{et al.}~\cite{tijs2006obligation} and Hougaard \textit{et al.}~\cite{hougaard2010decentralized} provided cost sharing methods based on the minimum spanning tree (MST)~\cite{prim1957shortest} of a graph. Recently, Zhang \textit{et al.}~\cite{10.5555/3545946.3598890} considered a cost sharing mechanism design problem similar to ours. However, they assumed that the connection costs are public information. In contrast, we consider a more realistic scenario where the costs of connections are private information and need to be reported by multiple contractors. Therefore, our definition of truthfulness differs from theirs and their solutions cannot solve our problem.

Other studies treated the problem from the perspective of cooperative game. They are all based on the Shapley value and differ in the definition of the value of each coalition. Kar~\cite{kar2002axiomatization} proposed the Kar solution and defined the value of a coalition as the minimal cost of connecting all agents of coalition to the source without going through the agents outside of coalition. Bergantiños and Vidal-Puga~\cite{bergantinos2007fair} proposed the folk solution. They first computed the irreducible cost matrix, and then based on it, they defined the value of a coalition in the same way as the Kar solution. However, the folk solution throws away much information of the original graph. To look for a way to obtain a solution without throwing away as much information as the folk solution, Trudeau~\cite{trudeau2012new} proposed the cycle-complete solution. They made a less extreme transformation of the cost matrix. Bergantiños and Vidal-Puga~\cite{bergantinos2007optimistic} proposed the optimistic game-based solution, where the value of a coalition is defined as the minimal cost of connecting all agents of coalition to the source given that agents outside of coalition are already connected to the source.

The work mentioned above did not consider the situation in which agents strategically block their connections to reduce their cost share and contractors misreport the costs of connections to increase their utilities. Hence, these solutions are insufficient for solving our problem.  

\section{The Model}
We formulate the cost sharing scenario discussed in Section~\ref{introo} using a set of weighted Directed Acyclic Graphs (DAGs) denoted as $\{G_k|G_k=(V \cup \{s\},E,W_k)\}$, where $s$ is the source to which all the other nodes want to connect, $V$ is the set of connected nodes representing agents and $k \in N=\{1,\cdots,n\}$ is a contractor. The directed edge $(i,j) \in E$ represents the connection from $i$ to $j$ and $i$ is called $j$'s parent. The weight $c_{(i,j)}^k \in W_k$ represents the cost of the edge $(i,j) \in E$ incurred by a contractor $k \in N$. A node $i \in V$ connects to the source if there exists a directed path from the source to $i$. The total cost of the connectivity has to be shared among all connected nodes except for $s$. 

Given a set of weighted DAGs, the questions here are how to select a contractor to be the winner, how to determine the payment of the winner of contractors and how to allocate this payment among the nodes. Two strategic behaviors considered are that each node except for the source blocks connections by cutting its outgoing edges and that each contractor misreports the weight of each edge. A directed edge $(i,j)$ cannot be used for connectivity if $i$ cuts it. Our goal is to design a cost sharing mechanism to incentivize contractors to truthfully report all the weights, and to incentivize nodes to offer all their outgoing edges so that we can use all the edges to minimize the total cost of the connectivity.      

The set of $i$'s outgoing edges is called $i$'s {\emph type}, denoted by $r_i$ $(i \in V \cup \{s\})$. The vector of all weights of the contractor $k \in N$ is called $k$'s {\emph type}, denoted by $c_k=(c_{(i,j)}^{k})_{(i,j)\in E}$. The vector of all nodes' types and all contractors' types is called type profile, denoted by $\theta=(r_1,\cdots,r_{|V|+1},c_1,\cdots,c_n)$. The set of all type profiles is called type profile space, denoted by $\Theta$.

We design a cost sharing mechanism that asks each node and each contractor to report their types. Let $r_i' \subseteq r_i$ be the {\emph report} of the node $i$ and $c_k' \in \mathbb{R}^{|\bigcup_{i}r_i'|}$ be the {\emph report} of the contractor $k$, and let $\theta'=(r_1',\cdots,r_{|V|+1}',c_1',\cdots,c_n')$ be the report profile of all nodes and contractors. Finally, given $k \in N$, the DAG induced by $c_k'$ and by $(r_1',\cdots,r_{|V|+1}')$ is denoted by $G_k'=(V \cup \{s\},E',W_k')$, where $E'=\bigcup_{i}r_i' \subseteq E$ and $W_k'$ is the set of reported weights of contractor $k$.

\begin{definition}
A cost sharing mechanism consists of a contractor selection policy $h: \Theta \rightarrow N$, a payment policy of contractors $p: \Theta \rightarrow \mathbb{R}^{n}$, an edge selection policy $f: \Theta \rightarrow 2^{E}$ and a cost sharing policy $x: \Theta \rightarrow \mathbb{R}^{|V|}$. Given a report profile $\theta' \in \Theta$, $h(\theta')\in N$ selects a contractor to be the winner, $p(\theta')=(p_i(\theta'))_{i \in N}$ where $p_i(\theta') \leq 0$ means that the mechanism pays $|p_i(\theta')|$ to the contractor $i$ ($|p_i(\theta')|=0$ if $i \ne h(\theta')$), $f(\theta') \subseteq E$ is the set of the edges connecting the nodes in $V$, and $x(\theta')=(x_i(\theta'))_{i \in V}$ determines the cost share $x_i(\theta')$ of each $i$.
\end{definition}

For simplicity, we use $(h,p,f,x)$ to denote a cost sharing mechanism. Given a report profile $\theta' \in \Theta$, the utility of a contractor $k \in N$ under $(h,p,f,x)$ is defined as 
\begin{equation*}
    u_k(\theta')=
    \begin{cases}
    |p_k(\theta')|-\sum_{(i,j) \in f(\theta')}{c}_{(i,j)}^{k}& \text{$k = h(\theta')$,}\\
    0& \text{$k \ne h(\theta')$,}
    \end{cases}
\end{equation*}
where ${c}_{(i,j)}^{k}$ is the weight of the edge $(i,j)$ incurred by $k$ and $p_k(\theta')$ is the payment of $k$ under $\theta'$.

In the following, we introduce the desirable properties of a cost sharing mechanism.

Truthfulness states that each node is incentivized to offer all its outgoing edges and that each contractor is incentivized to truthfully report the weights of all edges. Note that the source does not behave strategically in this setting.

\begin{definition}
A cost sharing mechanism $(h,p,f,x)$ satisfies \textbf{truthfulness} if $$x_i(r_i,\theta_{-i}') \leq x_i(r_i',\theta_{-i}')$$ and $$u_j(c_j,\theta_{-j}') \geq u_j(c_j', \theta_{-j}'),$$ for all $i \in V$ and for all $j \in N$, where $$\theta_{-i}'=(r_1',\cdots,r_{i-1}',r_{i+1}'.\cdots,r_{|V|+1}',c_1',\cdots,c_n')$$ and $$\theta_{-j}'=(r_1',\cdots,r_{|V|+1}',c_1',\cdots,c_{j-1}',c_{j+1}',\cdots,c_n').$$ 
\end{definition}

Another property is individual rationality, which means that the utility of each contractor is always non-negative.

\begin{definition}
A cost sharing mechanism $(h,p,f,x)$ satisfies \textbf{individual rationality (IR)} if $u_i(\theta')\geq 0$ for all $i \in N$ and for all $\theta' \in \Theta$.  
\end{definition}

We also require that the sum of all nodes' cost share equals the total amount paid to the contractors for all report profiles. That is, the mechanism has no profit or loss. 
\begin{definition}
A cost sharing mechanism $(h,p,f,x)$ satisfies \textbf{budget balance (BB)} if $\sum_{i \in V}x_i(\theta') = \sum_{i \in N}|p_i(\theta')|$ for all $\theta' \in \Theta$.     
\end{definition}

The ranking property requires that for any nodes $i$ and $j$ that have the same parents, if the cost of the edge $(k,i)$ is less expensive than the edge $(k,j)$ for any parent $k$, then $i$ pays less than $j$.
   
\begin{definition}
A cost sharing mechanism $(h,p,f,x)$ satisfies \textbf{ranking} if for all $\theta' \in \Theta$, for all $m \in N$, and for all $i, j \in V$ with $a_i(\theta') \backslash \{j\}=a_j(\theta') \backslash \{i\}$ where $a_i(\theta')$ and $a_j(\theta')$ are respectively the sets of parents of $i$ and $j$, it holds that $c_{(k,i)}^m<c_{(k,j)}^m$ implies $x_i(\theta') < x_j(\theta')$ for all $k \in a_i(\theta') \backslash \{j\}$.
\end{definition}

Positiveness states that each node's cost share should be non-negative. 
\begin{definition}
A cost sharing mechanism $(h,p,f,x)$ satisfies \textbf{positiveness} if $x_i(\theta') \geq 0$ for all $i \in V$ and for all $\theta' \in \Theta$.  
\end{definition}

Finally, we introduce the definition of symmetry which says nodes that play the same role pay the same amount.   
\begin{definition}
A cost sharing mechanism $(h,p,f,x)$ satisfies \textbf{symmetry} if for all $\theta' \in \Theta$, for all $m \in N$, and for all $i, j \in V$ with $a_i(\theta')\backslash \{j\}=a_j(\theta') \backslash \{i\}$ where $a_i(\theta')$ and $a_j(\theta')$ are respectively the sets of parents of $i$ and $j$, it holds that $c_{(k,i)}^m=c_{(k,j)}^m$ implies $x_i(\theta')=x_j(\theta')$ for all $k \in a_i(\theta') \backslash \{j\}$.
\end{definition}
  
In Section~\ref{spm}, we design a cost sharing mechanism to satisfy the above properties.

\section{The Benchmark Mechanisms Based on Shapley Value and Bird Rule on DAGs}
\label{exam}
Our goal is to design a cost sharing mechanism to satisfy truthfulness, budget balance and other properties. To guarantee that contractors truthfully report their types, we select a contractor that corresponds to a minimum spanning tree having the minimal cost, and pay it the second minimal cost.

If we allocate the payment using the DAG with the second minimum cost, nodes have the opportunity to reduce their cost share by strategically cutting their outgoing edges. Therefore, to guarantee that nodes truthfully report their types, we must proportionally allocate the payment using another DAG. The crucial challenge lies in determining the proportion of each node.

The Shapley value is a unique solution enabling the total cost to be shared completely and enabling each node's cost share to depend only on its marginal contribution~\cite{shapley1953value}. Since a node's decision to cut its outgoing edges affects its marginal contribution, we use Shapley value to compute each node's proportion and check whether the Shapley value based mechanism satisfies truthfulness. Another widely used method to compute each node's cost share is the Bird rule, where each node's cost share equals the cost of its incident edge on the MST~\cite{bird1976cost}. Since a node's decision to cut its outgoing edges affects its cost share under the Bird rule, we also use the Bird rule to compute each node's proportion and check whether the Bird rule based mechanism satisfies truthfulness.

\subsection{The Shapley Value Based Mechanism}
\label{svbm}
The key ideas of the mechanism are as follows. Given a report profile, when the contractor is selected and its payment is determined, we use Shapley value to compute each node's proportion. Specifically, we first compute the minimum cost of connecting any subset of nodes to the source. Then we get the marginal cost of adding each node to a subset of other nodes. Furthermore, we get the average marginal cost (Shapley value) of each node. Finally, each node's proportion equals the ratio of its Shapley value to the sum of all nodes' Shapley values. 

We introduce additional notations as follows. Let $j^*$ be the selected contractor, and let $c^{mst}_{j^*}$ denote the cost of $j^*$'s MST, i.e. $$c^{mst}_{j^*}=min_{j \in N}c^{mst}_j.$$ Let $k^*$ be the contractor the cost of whose MST is the second minimum, and let the cost of its MST be denoted by $$c^{mst}_{k^*}=min_{j \neq j^*, j \in N}c^{mst}_j.$$ 

The mechanism is formally described in Algorithm~\ref{shapleymechanism}. Line 1 to line 9 select the winner of contractors and determine the payments of all the contractors. Line 10 determines the set of selected edges. Line 11 to line 17 compute each node's proportion and its cost share.

\begin{algorithm}[tb]
    \caption{The Shapley Value Based Mechanism}
    \label{shapleymechanism}
    \textbf{Input}: A report profile $\theta' \in \Theta$\\
    
    \textbf{Output}: The winner $h(\theta')$,
    the payments $p(\theta')$,

    the edges $f(\theta')$,
    the cost shares $x(\theta')$
    
    \begin{algorithmic}[1] 
        \FOR{each contractor $j \in N$}
        \STATE Compute the MST of $G_j'$ using Prim's algorithm~\cite{prim1957shortest} and compute its cost $c_j^{mst}$;
        \ENDFOR
        \STATE Set $j^*=argmin_{j \in N}c_j^{mst}$;
        \STATE Set $k^*=argmin_{j \neq j^*, j \in N}c_j^{mst}$;
        \STATE Set $h(\theta')=j^*$ and $p_{j^*}(\theta')=-c_{k^*}^{mst}$;
        \FOR{each contractor $j \in N \backslash \{h(\theta')\}$}
        \STATE Set $p_j(\theta')=0$; 
        \ENDFOR
        \STATE Set $f(\theta')$ to be the set of all edges of the MST of $G_{j^*}'$;
        \FOR{each subset $S \subseteq V$}
        \STATE Compute $v(S)$, which is the minimum cost to connect $S$ to $s$;
        \ENDFOR  
        \FOR{each $i \in V$}
        \STATE Compute the Shapley value of $i$, i.e. 
        \begin{equation}
        \label{xiapu}
\phi_i=\sum_{S \subseteq V \backslash \{i\}}\frac{|S|!(|V|-|S|-1)!}{|V|!}\cdot (v(S \cup \{i\})-v(S));    
        \end{equation}
        \STATE Set $x_i(\theta')=\frac{\phi_i}{c_{j^*}^{mst}}\cdot c_{k^*}^{mst}$;
        \ENDFOR
        \STATE \textbf{return} $h(\theta')$, $p(\theta')$, $f(\theta')$, $x(\theta')$
        \end{algorithmic}
\end{algorithm}

Next, we show the properties of Shapley value based mechanism.

\begin{theorem}
The Shapley value based mechanism does not satisfy truthfulness.
\end{theorem}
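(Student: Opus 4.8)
The plan is to disprove truthfulness by exhibiting a single report profile and a profitable unilateral deviation of one agent, which breaks the agent clause $x_i(r_i,\theta_{-i}')\le x_i(r_i',\theta_{-i}')$ of the definition. Because truthfulness is the conjunction of the agent clause and the contractor clause, violating it for one agent is enough, so I only need to produce one agent who strictly lowers its cost share by cutting an outgoing edge; the natural place for the failure is the sharing side $x$, since the contractor selection is a second-price rule in which the winner builds its own reported MST. The entire argument will live inside one small, explicitly computed instance, so that the share rule $x_i(\theta')=\frac{\phi_i}{c_{j^*}^{mst}}\cdot c_{k^*}^{mst}$ and the Shapley formula~(\ref{xiapu}) from Algorithm~\ref{shapleymechanism} can be applied directly and checked by hand.

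Concretely, I would take two agents $V=\{1,2\}$ with edge set $\{(s,1),(s,2),(1,2)\}$, so that only agent $1$ has an outgoing edge it can cut, and two contractors $a$ and $b$. I choose the cheap contractor's weights $c^a=(c^a_{(s,1)},c^a_{(s,2)},c^a_{(1,2)})$ so that $c^a_{(s,1)}>c^a_{(s,2)}>c^a_{(1,2)}$; then in $G_a'$ agent $2$ is connected through agent $1$, i.e. the edge $(1,2)$ lies on $a$'s MST, while agent $1$ is the expensive one to reach directly. Contractor $b$ is made uniformly costly with $c^b_{(1,2)}$ so large that $b$'s MST never uses $(1,2)$; this guarantees that $a$ stays the winner $j^*$ and $b$ stays the runner-up $k^*$ both before and after the deviation, and that $c_{k^*}^{mst}$ is unchanged by the cut. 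I then evaluate the coalition values $v(S)$, the Shapley values from line~14, and the shares $x_1$ under the truthful profile and under the profile in which agent $1$ reports $r_1'=\emptyset$, and verify the strict drop. A representative instance is $c^a=(10,5,1)$ and $c^b=(10,10,100)$, for which agent $1$'s share falls from $\tfrac{160}{11}$ to $\tfrac{40}{3}$; note that no cross-agent relaying is ever cheaper here, so the value of the instance does not depend on whether $v(S)$ is read as permitting intermediate non-members.

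The conceptual reason behind the counterexample, which also dictates the choice of weights, is a denominator effect in the share rule. Cutting an edge that lies on the winner's MST strictly raises $c_{j^*}^{mst}=v(V)$, while $c_{k^*}^{mst}$ and hence the total budget $\sum_{i\in V}x_i$ are left fixed; thus the deviating agent gains exactly when its Shapley value fails to grow in proportion. A direct computation in the two-agent case shows that removing $(1,2)$ raises $\phi_1$ by precisely one half of the induced increase in $v(V)$, so $x_1$ strictly decreases if and only if $\phi_1>\tfrac12 v(V)$ before the cut, which the algebra reduces to the single condition $c^a_{(s,1)}>c^a_{(s,2)}$. In words, making agent $1$ the expensive endpoint to connect directly forces it to carry more than half of the total cost, and cutting the link that cheaply serves agent $2$ then offloads part of that burden onto agent $2$ even though it raises the true cost of connectivity --- exactly the tension between system optimality and self-interest flagged in the introduction.

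I expect the main obstacle to be bookkeeping consistency across the deviation rather than any isolated calculation. Cutting the edge simultaneously perturbs the winner's MST, potentially the identities of $j^*$ and $k^*$, every coalition value $v(S)$, and the scaling factor $c_{k^*}^{mst}/c_{j^*}^{mst}$; a careless instance will either flip the winner, so that the payment $c_{k^*}^{mst}$ jumps and the comparison is contaminated, or let $\phi_1$ grow at least as fast as $v(V)$, which destroys the inequality. The delicate task is therefore to pin the weights so as to isolate the denominator effect: the cut edge must sit on $a$'s MST but off $b$'s MST, the gap between $a$ and $b$ must be wide enough that $a$ survives as the winner after its MST cost rises, and the deviator's initial Shapley share must exceed one half. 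Once such weights are fixed, the rest is a routine evaluation of the two-agent Shapley formula together with the share rule, and the strict inequality $x_1(r_1',\theta_{-1}')<x_1(r_1,\theta_{-1}')$ follows.
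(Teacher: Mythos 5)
Your proposal is correct: it is the same proof strategy as the paper (exhibit one concrete instance with a profitable edge-cut), and I verified your numbers --- with $c^a=(10,5,1)$, $c^b=(10,10,100)$ the winner is $a$ with MST cost $11$ and payment $20$, the Shapley values are $\phi_1=8$, $\phi_2=3$, so $x_1=\tfrac{160}{11}$; after agent $1$ cuts $(1,2)$ the winner is still $a$ with MST cost $15$ and payment still $20$, $\phi_1=10$, and $x_1=\tfrac{40}{3}<\tfrac{160}{11}$. The instance itself works differently from the paper's: there, node $A$ cuts $(A,B)$ so that $B$ is disconnected and drops out of the sharing pool entirely, and both the winner's MST cost and the runner-up payment change; your instance keeps every node connected, keeps the winner and the payment $c_{k^*}^{mst}$ fixed, and isolates the pure denominator effect in $x_i=\frac{\phi_i}{c_{j^*}^{mst}}\cdot c_{k^*}^{mst}$, together with an exact characterization ($\phi_1>\tfrac12 v(V)$, equivalently $c^a_{(s,1)}>c^a_{(s,2)}$) of when the deviation pays in the two-agent case. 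That makes your counterexample a somewhat sharper diagnosis of why the mechanism fails --- it shows the failure does not depend on being able to exile another agent from the cost-sharing pool --- but both arguments are equally sufficient for the theorem.
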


\begin{figure}[htb]
    \centering
    \includegraphics[width=8.5cm]{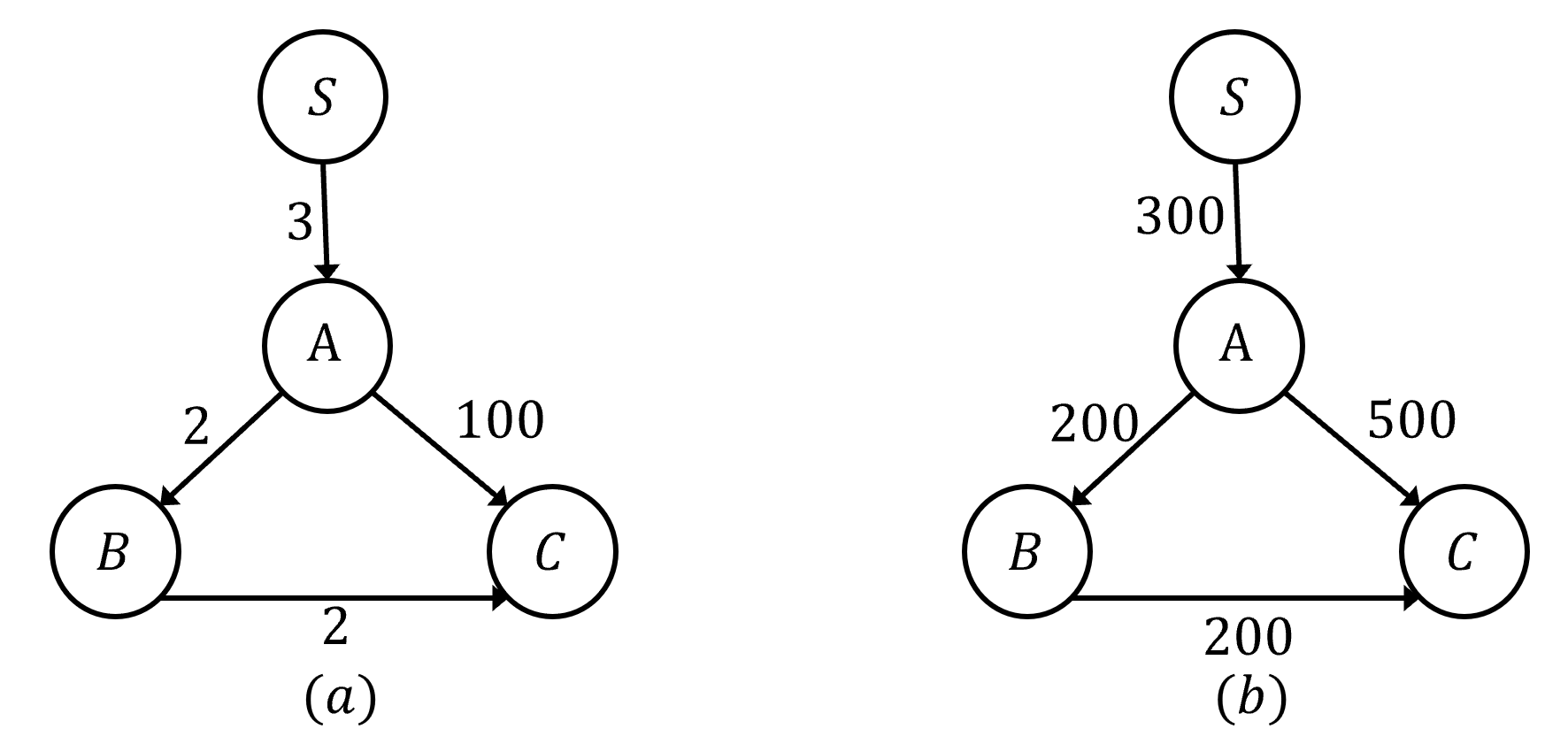}
    \caption{The $s$ represents the source, $A,B,C$ represent the nodes, and the numbers on the edges represent the costs of connections.}
    \label{SV1}
\end{figure}
\begin{figure}[htb]
    \centering
    \includegraphics[width=6.5cm]{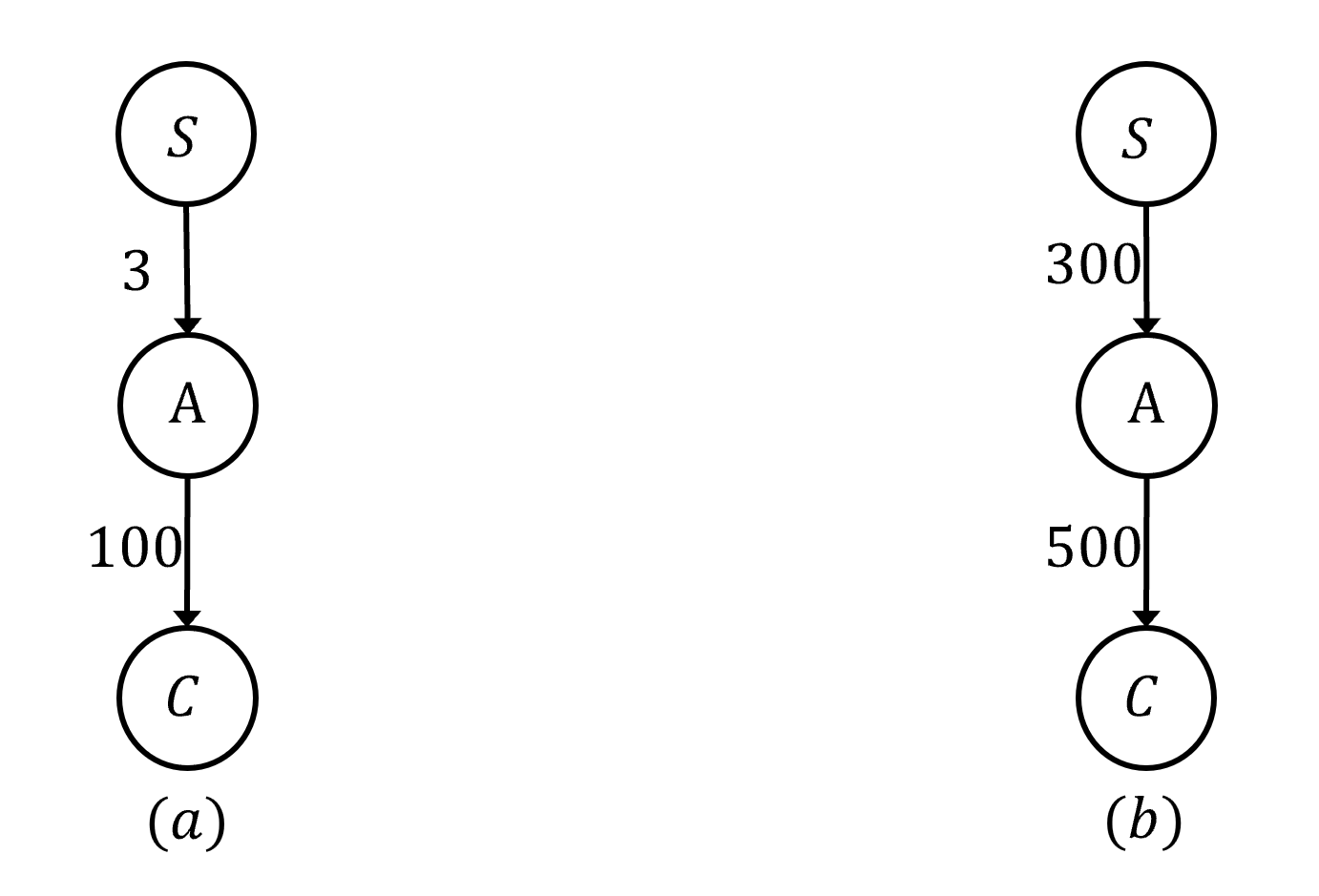}
    \caption{The $s$ represents the source, $A,C$ represent the nodes, and the numbers on the edges represent the costs of connections.}
    \label{SV2}
\end{figure}

\begin{proof}
Since truthfulness is defined for any set of DAGs, it suffices to give a specific set of DAGs where exists a node that is not willing to offer all its outgoing edges. Consider the two DAGs shown in Figure~\ref{SV1}, where the weights are reported by the contractors $a$ and $b$. The costs of minimum spanning trees of $a$ and $b$ are 7 and 700 respectively. So the winner is $a$ and the mechanism pays $a$ 700. Next, we compute the node $A$'s cost share using the left DAG in Figure~\ref{SV1}. First, the coalition values of all subsets of nodes are as follows: 
\begin{equation*}
    \begin{split}
    v(\{B\})&=5, v(\{A,B\})=5, \\
      v(\{A\})&=3, v(\{A,C\})=7, v(\emptyset)=0, \\
       v(\{C\})&=7, v(\{B,C\})=7, v(\{A,B,C\})=7. \\
    \end{split}
\end{equation*}
Second, according to Equation~(\ref{xiapu}), the Shapley values of $A,B$ and $C$ are $1,2$ and $4$ respectively. Finally, the cost share of $A$ is $\frac{1}{7}\cdot 700=100$. 

If the node $A$ cuts the edge $(A,B)$, then $B$ cannot join in the cost sharing since there does not exist a directed path from the source to $B$. The corresponding DAGs are illustrated in Figure~\ref{SV2}. The costs of minimum spanning trees of $a$ and $b$ are 103 and 800 respectively. Thus, the winner is still $a$ and the mechanism pays $a$ 800. Next, we compute the node $A$'s cost share using the left DAG in Figure~\ref{SV2}. The coalition values of all subsets of nodes are as follows: 
\begin{equation*}
    \begin{split}
    v(\{C\})&=103, v(\emptyset)=0, \\ 
    v(\{A\})&=3, v(\{A,C\})=103. \\   
    \end{split}
\end{equation*}
By Equation~(\ref{xiapu}), the Shapley values of $A$ and $C$ are $\frac{3}{2}$ and $\frac{203}{2}$ respectively. Finally, the cost share of $A$ is $$\frac{1.5}{103}\cdot 800 < 100.$$So the node $A$ is not willing to offer all its outgoing edges, i.e. the mechanism does not satisfy truthfulness. 
\end{proof}

\begin{theorem}
The Shapley value based mechanism satisfies budget balance. 
\end{theorem}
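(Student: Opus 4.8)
The plan is to evaluate both sides of the budget-balance identity $\sum_{i \in V} x_i(\theta') = \sum_{i \in N} |p_i(\theta')|$ directly from the description of the mechanism and show they coincide, the decisive ingredient being the efficiency (completeness) property of the Shapley value.

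First I would compute the right-hand side. By lines 6--8 of Algorithm~\ref{shapleymechanism}, only the winner $j^* = h(\theta')$ receives a nonzero payment, namely $p_{j^*}(\theta') = -c_{k^*}^{mst}$, while $p_j(\theta') = 0$ for every $j \neq j^*$. Hence $\sum_{i \in N} |p_i(\theta')| = c_{k^*}^{mst}$. Next I would compute the left-hand side. By line 15 each cost share is $x_i(\theta') = \frac{\phi_i}{c_{j^*}^{mst}} \cdot c_{k^*}^{mst}$, so that
$$\sum_{i \in V} x_i(\theta') = \frac{c_{k^*}^{mst}}{c_{j^*}^{mst}} \sum_{i \in V} \phi_i.$$
Thus the whole identity reduces to the single claim $\sum_{i \in V} \phi_i = c_{j^*}^{mst}$.

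This is precisely the efficiency property of the Shapley value. Summing Equation~(\ref{xiapu}) over all $i \in V$, the marginal-contribution terms telescope and yield $\sum_{i \in V} \phi_i = v(V) - v(\emptyset) = v(V)$, using $v(\emptyset) = 0$. It then remains to identify $v(V)$, the minimum cost of connecting the whole node set $V$ to $s$ (line 11), with $c_{j^*}^{mst}$, the cost of the winner's minimum spanning tree computed from $G_{j^*}'$ (lines 2 and 10). Since a minimum spanning tree is by definition a minimum-cost structure connecting all nodes to the source, these two quantities are equal, giving $\sum_{i \in V} \phi_i = c_{j^*}^{mst}$. Substituting back, $\sum_{i \in V} x_i(\theta') = c_{k^*}^{mst} = \sum_{i \in N} |p_i(\theta')|$, which is exactly budget balance, and this holds for every $\theta' \in \Theta$.

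The computation is almost entirely bookkeeping; the only step requiring genuine care is the efficiency claim $\sum_i \phi_i = v(V)$. I expect this telescoping identity to be the main (though classical) obstacle, so I would either cite the standard characterization of the Shapley value or write out the telescoping of the weighted marginal contributions explicitly. I would also make sure to note that the coalition values $v(S)$ in line 11 are computed on the winner's DAG $G_{j^*}'$, so that the equality $v(V) = c_{j^*}^{mst}$ genuinely holds and the two $c_{j^*}^{mst}$ factors cancel cleanly.
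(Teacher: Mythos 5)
Your proof is correct and follows essentially the same route as the paper: both reduce budget balance to the fact that the cost shares sum to $c_{k^*}^{mst}$, which is exactly what the winner is paid. The only difference is that you make explicit the step the paper glosses over --- namely that the denominator $c_{j^*}^{mst}$ in line 15 equals $\sum_{i\in V}\phi_i$ via the efficiency (telescoping) property of the Shapley value together with the identification $v(V)=c_{j^*}^{mst}$ on the winner's DAG --- which is a worthwhile clarification rather than a different argument.
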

\begin{proof}
From Algorithm~\ref{shapleymechanism}, the cost share of each node equals the product of its proportion multiplied by the cost of the second minimum spanning tree where the proportion is the ratio of its Shapley value to the sum of all nodes' Shapley values. Therefore, the sum of all nodes' cost share equals the cost of the second minimum spanning tree. Since the mechanism pays the winner of contractors the cost of the second minimum spanning tree, the mechanism has no profit or loss. Hence, the mechanism satisfies budget balance.
\end{proof}

\begin{theorem}
The Shapley value based mechanism satisfies positiveness. 
\end{theorem}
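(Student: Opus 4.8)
The plan is to show that each node's cost share $x_i(\theta')$ is non-negative by tracing through the formula in Line~15 of Algorithm~\ref{shapleymechanism}, namely $x_i(\theta')=\frac{\phi_i}{c_{j^*}^{mst}}\cdot c_{k^*}^{mst}$. Since this is a product of three quantities, it suffices to establish that each factor is non-negative: the Shapley value $\phi_i$, the reciprocal of the minimum spanning tree cost $c_{j^*}^{mst}$, and the second minimum spanning tree cost $c_{k^*}^{mst}$. The latter two are MST costs, which are sums of edge weights, so under the standing assumption that reported weights are non-negative they are non-negative (and $c_{j^*}^{mst}>0$ whenever there is at least one node to connect, so the ratio is well-defined). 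The crux of the argument is therefore the non-negativity of the Shapley value $\phi_i$.

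To handle $\phi_i$, I would argue directly from Equation~(\ref{xiapu}). Each coefficient $\frac{|S|!(|V|-|S|-1)!}{|V|!}$ is manifestly non-negative, so it is enough to show that every marginal contribution $v(S\cup\{i\})-v(S)$ is non-negative. Here $v(S)$ is defined as the minimum cost of connecting the set $S$ to the source $s$. The key monotonicity claim is that adding a node to the coalition cannot decrease the minimum connection cost, i.e.\ $v(S\cup\{i\})\geq v(S)$ for all $S\subseteq V\setminus\{i\}$. I would establish this by taking an optimal connection structure (a minimum-cost subgraph connecting $S\cup\{i\}$ to $s$) and observing that removing the portion used to reach $i$ yields a feasible structure connecting $S$ to $s$ whose cost is at most that of the original; hence the cost of connecting $S$ is no larger than the cost of connecting $S\cup\{i\}$.

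The main obstacle I anticipate is making this monotonicity argument rigorous on a directed graph, because in a DAG deleting the edges used to reach $i$ might inadvertently disconnect some node of $S$ that was routed through $i$. I would address this by appeal to the definition of $v(S)$ as a minimum over all feasible ways of connecting $S$: given any optimal arborescence/subgraph witnessing $v(S\cup\{i\})$, its restriction still connects every node of $S$ to $s$ (every node in $S$ has a directed path from $s$ that need not pass through $i$, or if it did, one can reroute; more simply, the full structure already connects $S$ and its cost upper-bounds $v(S)$ only after we argue a cheaper $S$-connecting structure exists). The clean way to sidestep the rerouting subtlety is to note that any subgraph connecting $S\cup\{i\}$ to $s$ is in particular a subgraph connecting $S$ to $s$, so it is a feasible candidate in the minimization defining $v(S)$; therefore $v(S)\leq v(S\cup\{i\})$ immediately, giving non-negative marginal contributions and hence $\phi_i\geq 0$. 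Combining the three non-negative factors yields $x_i(\theta')\geq 0$ for all $i\in V$ and all $\theta'\in\Theta$, which is exactly positiveness.
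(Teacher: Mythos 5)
Your proposal is correct and follows essentially the same route as the paper's own proof: both reduce positiveness to the non-negativity of the marginal contributions $v(S\cup\{i\})-v(S)$, conclude $\phi_i\geq 0$, and multiply by the non-negative ratio of MST costs. Your write-up merely supplies the monotonicity argument for $v$ (any subgraph connecting $S\cup\{i\}$ to $s$ also connects $S$ to $s$) in more detail than the paper, which asserts it directly from the definition of the value function.
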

\begin{proof}
For each node $i \in V$, by Algorithm~\ref{shapleymechanism} and the definition of value function, we get that $v(S \cup \{i\})-v(S)\geq 0$, $\forall S \subseteq V$. Therefore, each node's Shapley value is non-negative. Thus, the proportion of $i$ is non-negative. Since the cost of second minimum spanning tree is positive, each node's cost share is non-negative.      
\end{proof}

\begin{theorem}
\label{svir}
The Shapley value based mechanism satisfies individual rationality.  
\end{theorem}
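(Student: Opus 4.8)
The plan is to show that the winning contractor's utility is non-negative, since all other contractors have utility exactly zero by definition. Let $j^* = h(\theta')$ be the winner. By the definition of utility, $u_{j^*}(\theta') = |p_{j^*}(\theta')| - \sum_{(i,j) \in f(\theta')} c_{(i,j)}^{j^*}$. From Algorithm~\ref{shapleymechanism}, the payment to the winner is $|p_{j^*}(\theta')| = c_{k^*}^{mst}$, the cost of the second minimum spanning tree, and the selected edge set $f(\theta')$ is exactly the MST of $G_{j^*}'$, so the summed term equals $c_{j^*}^{mst}$. Hence the winner's utility is $c_{k^*}^{mst} - c_{j^*}^{mst}$.

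First I would establish that the cost incurred by the winner on the selected edges equals its own MST cost $c_{j^*}^{mst}$: this is immediate because $f(\theta')$ is the MST of $G_{j^*}'$ and the weights $c_{(i,j)}^{j^*}$ are exactly the reported weights of contractor $j^*$ used to build that tree. Then I would invoke the winner-selection rule: $j^* = \arg\min_{j \in N} c_j^{mst}$ and $k^* = \arg\min_{j \neq j^*, j \in N} c_j^{mst}$. By construction $c_{k^*}^{mst}$ is the minimum over all contractors other than $j^*$, so $c_{k^*}^{mst} \geq c_{j^*}^{mst}$. Therefore $u_{j^*}(\theta') = c_{k^*}^{mst} - c_{j^*}^{mst} \geq 0$.

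Finally, for every losing contractor $k \neq j^*$, the utility is defined to be $0$, which is trivially non-negative. Combining the two cases gives $u_i(\theta') \geq 0$ for all $i \in N$ and all $\theta' \in \Theta$, establishing individual rationality.

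The argument is essentially a direct computation, so I do not anticipate a serious obstacle; the one point requiring minor care is confirming that the ``second minimum'' is well-defined and genuinely no smaller than the minimum. This holds whenever $|N| = n \geq 2$, and the inequality $c_{k^*}^{mst} \geq c_{j^*}^{mst}$ follows from the minimality of $c_{j^*}^{mst}$ over all of $N$. Implicit here is that each $G_j'$ admits a spanning tree so that $c_j^{mst}$ is finite and well-defined; I would note this connectivity assumption if it is not already guaranteed by the model.
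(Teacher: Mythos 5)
Your proof is correct and follows essentially the same route as the paper's: losing contractors have utility zero by definition, and the winner's utility is $c_{k^*}^{mst}-c_{j^*}^{mst}\geq 0$ because $j^*$ minimizes the MST cost over all contractors. You are somewhat more careful than the paper in spelling out why the winner's incurred cost equals $c_{j^*}^{mst}$ and in flagging that the second minimum must exist ($n\geq 2$), though note that, as in the paper, identifying the winner's true incurred cost with its reported MST cost implicitly assumes truthful reporting.
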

\begin{proof}
From Algorithm~\ref{shapleymechanism}, if a contractor is not the winner, then its utility is 0. If a contractor is the winner, then the cost of its minimum spanning tree is minimized. Since the mechanism pays the winner the cost of second minimum spanning tree, its utility (the difference between the second minimum of costs of MSTs and the minimum of costs of MSTs) is non-negative. Therefore, each contractor's utility is non-negative, i.e. the mechanism satisfies individual rationality.  
\end{proof}

\subsection{The Bird Rule Based Mechanism}
In Section~\ref{svbm}, we show that the Shapley value based mechanism does not satisfy truthfulness. In this subsection, we check whether the mechanism based on the Bird rule~\cite{bird1976cost} satisfies truthfulness.

The difference between the Bird rule based mechanism and Shapley value based mechanism only lies in the method of computing the proportion of each node. Under the Bird rule based mechanism, we first compute the MST of a DAG by Prim's algorithm. Second, we set the proportion of each node equal to the ratio of the cost of its incident edge on the MST to the cost of the MST.

We restate the following notations. Let $j^*$ be the selected contractor, and let $c^{mst}_{j^*}$ denote the cost of $j^*$'s MST, i.e. $$c^{mst}_{j^*}=min_{j \in N}c^{mst}_j.$$ Let $k^*$ be the contractor the cost of whose MST is the second minimum, and let the cost of its MST be denoted by $$c^{mst}_{k^*}=min_{j \neq j^*, j \in N}c^{mst}_j.$$ 

The mechanism is formally described in Algorithm~\ref{birdmechanism}. Line 1 to line 9 compute the winner of contractors and all the contractors' payments. Line 10 determines the set of selected edges. Line 11 to line 13 compute each node's cost share.    

\begin{algorithm}[tb]
    \caption{The Bird Rule Based Mechanism}
    \label{birdmechanism}
    \textbf{Input}: A report profile $\theta' \in \Theta$\\
    
    \textbf{Output}: The winner $h(\theta')$,
    the payments $p(\theta')$,

    the edges $f(\theta')$,
    the cost shares $x(\theta')$
    
    \begin{algorithmic}[1] 
        \FOR{each contractor $j \in N$}
        \STATE Compute the MST of $G_j'$ using Prim's algorithm and compute its cost $c_j^{mst}$;
        \ENDFOR
        \STATE Set $j^*=argmin_{j \in N}c_j^{mst}$;
        \STATE Set $k^*=argmin_{j \neq j^*, j \in N}c_j^{mst}$;
        \STATE Set $h(\theta')=j^*$ and $p_{j^*}(\theta')=-c_{k^*}^{mst}$;
        \FOR{each contractor $j \in N \backslash \{h(\theta')\}$}
        \STATE Set $p_j(\theta')=0$; 
        \ENDFOR
        \STATE Set $f(\theta')$ to be the set of all edges of the MST of $G_{j^*}'$;
        \FOR{each $i \in V$}
        \STATE Set $x_i(\theta')=\frac{k_i}{c_{j^*}^{mst}}\cdot c_{k^*}^{mst}$, where $k_i$ is the cost of $i$'s incident edge on the MST of $G_{j^*}'$;
        \ENDFOR
        \STATE \textbf{return} $h(\theta')$, $p(\theta')$, $f(\theta')$, $x(\theta')$
        \end{algorithmic}
\end{algorithm}

Next, we show the properties of Bird rule based mechanism.

\begin{theorem}
The Bird rule based mechanism does not satisfy truthfulness.
\end{theorem}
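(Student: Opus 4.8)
The plan is to prove that the Bird rule based mechanism fails truthfulness by constructing an explicit counterexample, exactly parallel to the proof given earlier for the Shapley value based mechanism. Since truthfulness must hold for \emph{all} sets of DAGs, it suffices to exhibit one node that strictly lowers its cost share by cutting an outgoing edge. First I would fix a small set of DAGs (two contractors suffices) on a source $s$ and a handful of nodes, choosing the weights so that one contractor, say $a$, wins with the minimum spanning tree cost $c_a^{mst}$ and the other contractor $b$ supplies the second minimum cost $c_b^{mst}$; the payment to the winner is $c_b^{mst}$ in either scenario.

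Next I would compute, under truthful reporting, the MST of the winning DAG $G_a'$ via Prim's algorithm, identify for the target node $i$ its incident edge on that MST with cost $k_i$, and evaluate its cost share $x_i = \frac{k_i}{c_a^{mst}}\cdot c_b^{mst}$. I would then consider the deviation in which $i$ cuts one of its outgoing edges, which severs at least one descendant from the source so that descendant drops out of the cost sharing. In the resulting smaller DAGs I would recompute the winning MST, the new incident-edge cost $k_i'$, the new MST costs $c_a^{mst\prime}$, $c_b^{mst\prime}$, and hence the new share $x_i' = \frac{k_i'}{c_a^{mst\prime}}\cdot c_b^{mst\prime}$. The goal is to arrange the numbers so that $x_i' < x_i$, demonstrating that offering all outgoing edges is not optimal for $i$.

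The main obstacle, and the step requiring the most care, is choosing the weights so that cutting the edge genuinely reduces $i$'s share rather than merely changing the overall cost. The Bird share depends on the ratio $k_i / c^{mst}$ scaled by the second minimum cost, so I must ensure that cutting the edge lowers $i$'s incident-edge cost on the new MST (or lowers its proportion) by enough to offset any increase in $c_b^{mst\prime}$ caused by removing a cheap connection. A natural construction reuses the structure of Figures~\ref{SV1} and~\ref{SV2}: let $i$ have an expensive incident edge on the full MST that it is forced to bear, but let cutting an outgoing edge to an intermediate node force a re-routing in which $i$'s own incident edge becomes cheaper relative to the total. I would verify the strict inequality $x_i' < x_i$ by direct arithmetic on the two scenarios.

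Concretely, I would present the counterexample as two figures and a short computation, stating the MST costs for both contractors in each scenario, the winner and payment, the identified incident edges, and the two resulting shares, concluding with the strict inequality that exhibits the profitable deviation and hence the failure of truthfulness.
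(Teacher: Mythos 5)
Your overall strategy --- exhibit one explicit pair of DAGs in which some node strictly lowers its Bird-rule share $\frac{k_i}{c^{mst}}\cdot c_{k^*}^{mst}$ by cutting an outgoing edge --- is exactly the paper's approach. However, for an impossibility claim proved by counterexample, the counterexample \emph{is} the proof, and you stop precisely at the step you yourself identify as "the main obstacle": you never commit to concrete weights or verify the strict inequality. As written, the proposal is a plan for a proof rather than a proof, so there is a genuine gap.

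There is also a subtle misdirection in the mechanism you sketch. You assume the cut "severs at least one descendant from the source so that descendant drops out of the cost sharing," and you anticipate having to offset an increase in the second-minimum cost $c_{k^*}^{mst}$. The paper's working example (Figures~\ref{BR1} and~\ref{BR2}) does the opposite: node $C$ cuts $(C,B)$, but $B$ remains connected via a more expensive route, so the winner's MST cost doubles from $7$ to $14$ while $C$'s incident-edge cost stays at $4$ and the second-minimum cost stays at $35$. The share drops from $\frac{4}{7}\cdot 35$ to $\frac{4}{14}\cdot 35$ purely because the denominator of the proportion inflates. That is the clean lever for the Bird rule: keep the numerator $k_i$ and the multiplier $c_{k^*}^{mst}$ fixed and blow up $c_{j^*}^{mst}$. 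The disconnect-a-descendant mechanism you describe is the one used against the Shapley mechanism; transplanting it here without checking would require re-verifying that the Bird proportion actually falls, which your proposal does not do.
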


\begin{figure}[htb]
    \centering
    \includegraphics[width=9.2cm]{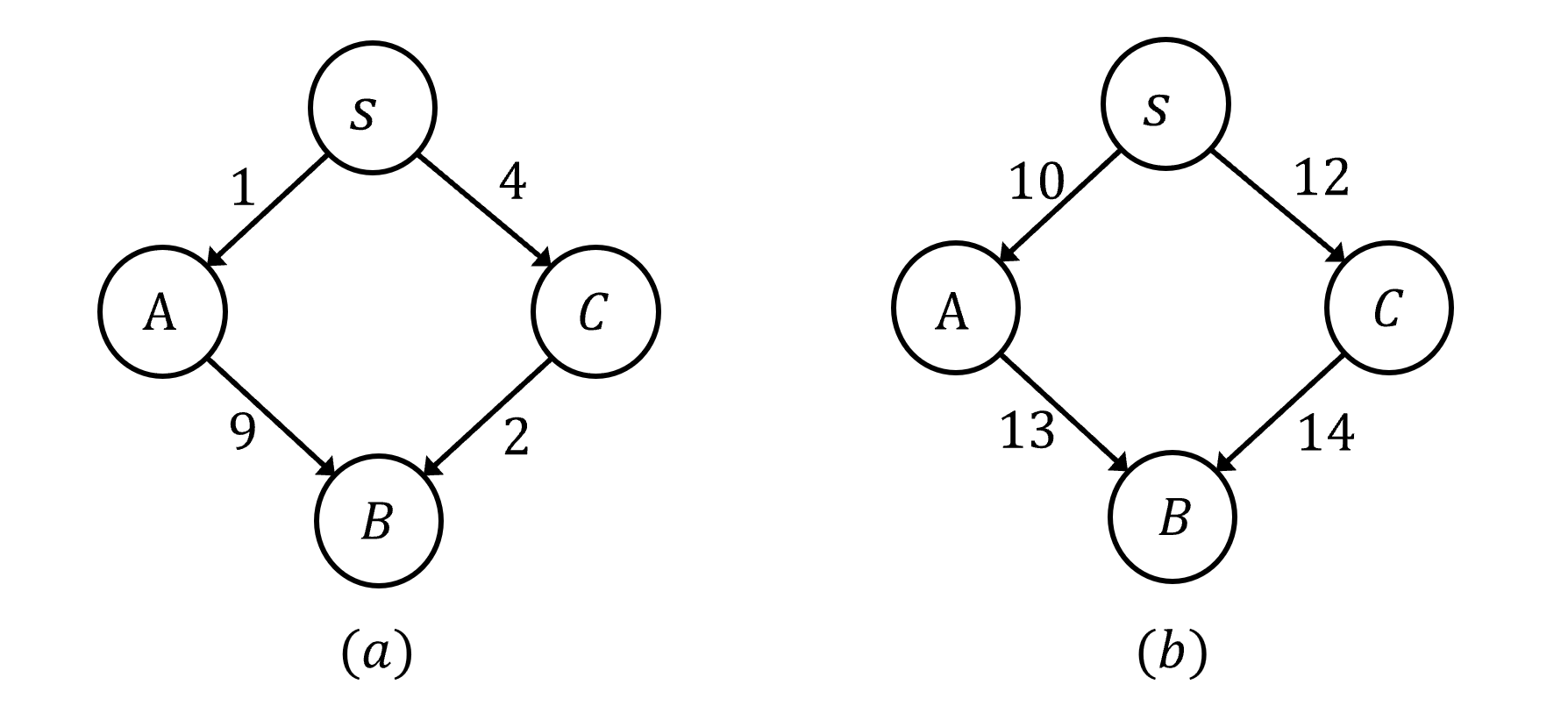}
    \caption{The $s$ represents the source, $A,B,C$ represent the nodes, and the numbers on the edges represent the costs of connections.}
    \label{BR1}
\end{figure}
\begin{figure}[htb]
    \centering
    \includegraphics[width=8.5cm]{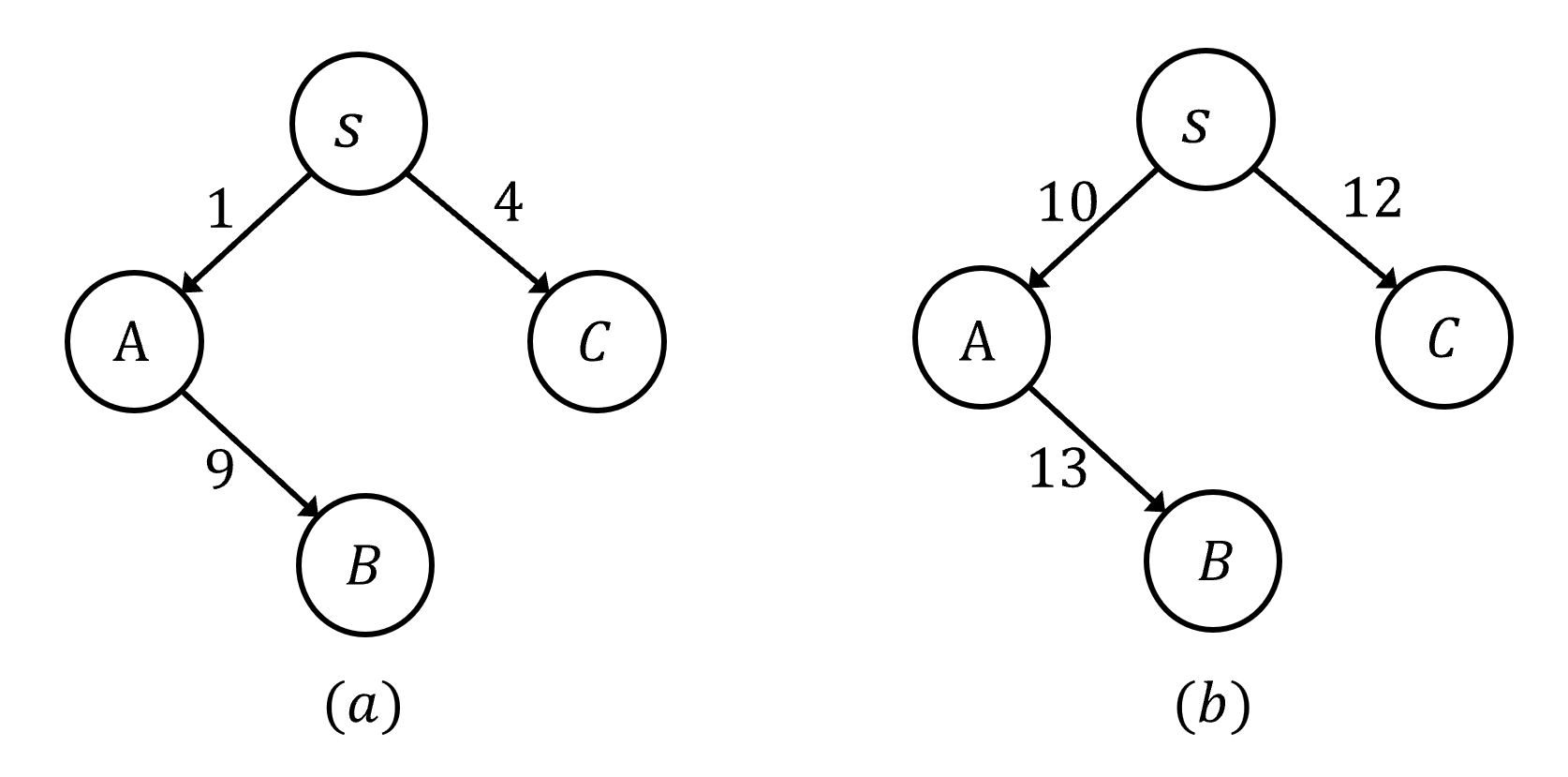}
    \caption{The $s$ represents the source, $A,B,C$ represent the nodes, and the numbers on the edges represent the costs of connections.}
    \label{BR2}
\end{figure}
\begin{proof}
Since truthfulness is defined for any set of DAGs, it suffices to give a specific set of DAGs where exists a node that is not willing to offer all its outgoing edges. Consider the two DAGs shown in Figure~\ref{BR1}, where the weights are reported by the contractors $a$ and $b$. The costs of minimum spanning trees of $a$ and $b$ are 7 and 35 respectively. So the winner is $a$ and the mechanism pays $a$ 35. Next, we compute the node $C$'s cost share on the left DAG in Figure~\ref{BR1}. The cost of its incident edge on the MST of the left DAG is 4. Thus the cost share of $C$ is $\frac{4}{7} \cdot 35$.       

If the node $C$ cuts the edge $(C,B)$, the corresponding DAGs are illustrated in Figure~\ref{BR2}. The costs of minimum spanning trees of $a$ and $b$ are 14 and 35 respectively. So the winner is still $a$ and the mechanism pays $a$ 35. Next, we compute the node $C$'s cost share on the left DAG in Figure~\ref{BR2}. The cost of its incident edge on the MST of the left DAG is still 4. Thus the cost share of $C$ is $$\frac{4}{14}\cdot 35 < \frac{4}{7} \cdot 35.$$So the node $C$ is not willing to offer all its outgoing edges, i.e. the mechanism does not satisfy truthfulness.
\end{proof}

\begin{theorem}
The Bird rule based mechanism satisfies budget balance. 
\end{theorem}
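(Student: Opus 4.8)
The plan is to verify the defining identity $\sum_{i \in V} x_i(\theta') = \sum_{i \in N} |p_i(\theta')|$ by computing each side separately and showing they both collapse to $c_{k^*}^{mst}$. I would start with the right-hand side, since it is immediate from the payment rule: by Algorithm~\ref{birdmechanism}, the only contractor receiving a nonzero payment is the winner $h(\theta')=j^*$, with $p_{j^*}(\theta')=-c_{k^*}^{mst}$, while $p_j(\theta')=0$ for every $j \neq j^*$. Hence $\sum_{i \in N}|p_i(\theta')| = c_{k^*}^{mst}$.

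Next I would turn to the left-hand side. Substituting the cost-share formula $x_i(\theta')=\frac{k_i}{c_{j^*}^{mst}}\cdot c_{k^*}^{mst}$ and factoring out the constant $\frac{c_{k^*}^{mst}}{c_{j^*}^{mst}}$ gives
\begin{equation*}
\sum_{i \in V} x_i(\theta') = \frac{c_{k^*}^{mst}}{c_{j^*}^{mst}} \sum_{i \in V} k_i .
\end{equation*}
So the whole claim reduces to the single arithmetic fact that $\sum_{i \in V} k_i = c_{j^*}^{mst}$, after which the two constants cancel and the left side equals $c_{k^*}^{mst}$, matching the right side.

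The crux of the argument — and the one step deserving genuine care — is establishing $\sum_{i \in V} k_i = c_{j^*}^{mst}$, i.e. that the Bird-rule proportions sum to one. Here I would view the MST of $G_{j^*}'$ as a tree rooted at the source $s$ on the node set $V \cup \{s\}$. Such a spanning tree has exactly $|V|$ edges, and each non-source node $i \in V$ has a unique incident (parent) edge directed toward $s$ along the tree; this is precisely the edge whose cost is $k_i$. This sets up a bijection between the nodes of $V$ and the edges of the MST, so summing $k_i$ over all $i \in V$ counts every tree edge exactly once and therefore equals the total tree cost $c_{j^*}^{mst}$.

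I expect no serious obstacle beyond making the bijection precise: one must confirm that ``$i$'s incident edge on the MST'' is well-defined (each node has exactly one such edge, rather than several), which is exactly the rooted-tree structure noted above. Once that is in place, combining the two computations yields $\sum_{i \in V}x_i(\theta') = c_{k^*}^{mst} = \sum_{i \in N}|p_i(\theta')|$, and since $\theta'$ was arbitrary the mechanism satisfies budget balance. The overall structure mirrors the budget-balance proof already given for the Shapley value based mechanism, the only difference being that there the proportions sum to one because the Shapley values partition $c_{j^*}^{mst}$, whereas here they sum to one because the incident-edge costs partition $c_{j^*}^{mst}$.
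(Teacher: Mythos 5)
Your proposal is correct and follows essentially the same route as the paper's own proof: both reduce budget balance to the fact that the incident-edge costs $k_i$ sum to $c_{j^*}^{mst}$, so the proportions sum to one and the total cost share equals the payment $c_{k^*}^{mst}$. You simply make explicit the node--edge bijection in the rooted spanning tree that the paper leaves implicit.
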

\begin{proof}
From Algorithm~\ref{birdmechanism}, the cost share of each node equals the product of its proportion multiplied by the cost of the second minimum spanning tree. The proportion is the ratio of the cost of its incident edge on the MST to the cost of MST. Therefore, the sum of all nodes' cost share equals the cost of the second minimum spanning tree. Since the mechanism pays the winner of contractors the cost of the second minimum spanning tree, the mechanism has no profit or loss. Hence, the mechanism satisfies budget balance.
\end{proof}

\begin{theorem}
The Bird rule based mechanism satisfies positiveness.
\end{theorem}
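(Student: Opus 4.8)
The plan is to follow the same template as the positiveness proof for the Shapley value based mechanism given earlier in the excerpt, since the only structural difference between the two mechanisms lies in how the per-node proportion is computed; the payment and winner-selection steps are identical. First I would recall from Algorithm~\ref{birdmechanism} that each node's cost share has the explicit closed form $x_i(\theta')=\frac{k_i}{c_{j^*}^{mst}}\cdot c_{k^*}^{mst}$, where $k_i$ denotes the cost of $i$'s incident edge on the MST of $G_{j^*}'$. The statement thus reduces to showing that this product is non-negative for every $i \in V$ and every report profile $\theta' \in \Theta$.

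Next I would verify the sign of each of the three factors. The numerator $k_i$ is the reported weight of an edge on the spanning tree, which is a connection cost and is therefore non-negative, so $k_i \ge 0$. The denominator $c_{j^*}^{mst}$ and the multiplier $c_{k^*}^{mst}$ are the costs of the minimum and the second minimum spanning trees respectively; each is a sum of non-negative edge weights and is positive, which is exactly the positivity of spanning-tree costs invoked in the corresponding positiveness argument for the Shapley value based mechanism. A non-negative quantity divided by a positive quantity and multiplied by a positive quantity is non-negative, hence $x_i(\theta') \ge 0$, which is precisely positiveness.

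The argument becomes essentially immediate once the closed form of $x_i$ is written down, so I do not anticipate a genuine obstacle. The only point requiring care is making explicit the standing modeling assumption that edge costs are non-negative and that spanning-tree costs are strictly positive; the latter both guarantees that the denominator $c_{j^*}^{mst}$ does not vanish and keeps the argument consistent with the earlier proofs in the excerpt. With these observations in place the theorem follows directly.
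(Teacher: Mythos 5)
Your argument matches the paper's own proof: both observe that $k_i \geq 0$ because edge costs are non-negative, that the (second) minimum spanning tree costs are positive, and hence that $x_i(\theta')=\frac{k_i}{c_{j^*}^{mst}}\cdot c_{k^*}^{mst} \geq 0$. The proposal is correct and takes essentially the same route, just spelled out in slightly more detail.
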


\begin{proof}
The cost of each node's incident edge on the MST is non-negative. Then the proportion of each node is non-negative. Since the cost of second minimum spanning tree is positive, each node's cost share is non-negative. That is, the mechanism satisfies positiveness.       
\end{proof}

\begin{theorem}
The Bird rule based mechanism satisfies individual rationality.  
\end{theorem}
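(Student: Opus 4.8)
The plan is to observe that individual rationality is a statement purely about the contractors' utilities $u_k(\theta')$, and that, by the definition of utility, $u_k$ depends only on the contractor selection policy $h$, the payment policy $p$, and the edge selection policy $f$; it is completely independent of the cost sharing policy $x$. Comparing Algorithm~\ref{birdmechanism} with Algorithm~\ref{shapleymechanism}, lines 1--10 (which define $h$, $p$, and $f$) are identical in the two mechanisms, and they differ only in how $x$ is computed. Hence the argument used to prove Theorem~\ref{svir} carries over essentially verbatim, and no genuinely new work is required.

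For completeness I would restate the two cases. First, if a contractor $k$ is not the winner, then $p_k(\theta')=0$ and, by the definition of utility, $u_k(\theta')=0\geq 0$. Second, if $k=h(\theta')=j^*$ is the winner, then by construction (line 10) $f(\theta')$ is exactly the edge set of the MST of $G_{j^*}'$, so the total cost the winner incurs on the selected edges equals $c^{mst}_{j^*}$, while the mechanism pays it $|p_{j^*}(\theta')|=c^{mst}_{k^*}$. Therefore
$$u_{j^*}(\theta')=c^{mst}_{k^*}-c^{mst}_{j^*}.$$
Since $j^*=\arg\min_{j\in N}c^{mst}_j$ and $k^*=\arg\min_{j\neq j^*,\,j\in N}c^{mst}_j$, we have $c^{mst}_{j^*}\leq c^{mst}_{k^*}$, so this difference is non-negative. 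Combining the two cases yields $u_k(\theta')\geq 0$ for every contractor $k$ and every report profile $\theta'\in\Theta$, which is precisely individual rationality.

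There is essentially no obstacle here: the only conceptual point is to notice that IR is decoupled from the cost sharing rule, so replacing the Shapley value by the Bird rule — which only alters $x$ — cannot affect it. The one detail worth stating carefully is that the winner's incurred cost on $f(\theta')$ really is $c^{mst}_{j^*}$, which holds because $f(\theta')$ is defined to be exactly the edge set of the minimum spanning tree whose cost is $c^{mst}_{j^*}$.
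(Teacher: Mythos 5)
Your proof is correct and follows essentially the same two-case argument as the paper's (non-winners get utility $0$; the winner is paid the second-minimum MST cost while incurring the minimum MST cost, so its utility is non-negative), with the added but accurate observation that IR is independent of the cost sharing rule $x$. Note that both your argument and the paper's implicitly identify the winner's incurred cost with its \emph{reported} MST cost $c^{mst}_{j^*}$, i.e.\ they tacitly assume truthful reporting when evaluating the utility, but this is exactly the paper's own reading of the claim.
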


\begin{proof}
From Algorithm~\ref{birdmechanism}, if a contractor is not the winner, then its utility is 0. If a contractor is the winner, then the cost of its minimum spanning tree is minimized. Since the mechanism pays the winner the cost of second minimum spanning tree, its utility (the difference between the second minimum of costs of MSTs and the minimum of costs of MSTs) is non-negative. Therefore, each contractor's utility is non-negative, i.e. the mechanism satisfies individual rationality.   
\end{proof}

In summary, although both the Shapley value based mechanism and Bird rule based mechanism satisfy budget balance, positiveness and individual rationality, they do not satisfy truthfulness. Therefore, in the next section, we propose a novel cost sharing mechanism that satisfies truthfulness and other properties.

\section{The Shortest Path Based Mechanism}
\label{spm}
The reason why the mechanisms based on Shapley value and Bird rule do not satisfy truthfulness is that nodes can cut their outgoing edges to reduce their proportions and thus their cost share decreases. Therefore, to satisfy truthfulness, we need to guarantee that cutting outgoing edges does not decrease each node's proportion.

In this section, we propose a novel cost sharing mechanism based on the shortest path that satisfies truthfulness. In addition, we show that it also satisfies budget balance, positiveness, individual rationality, ranking and symmetry. 

The key ideas of the proposed mechanism are as follows. To incentivize the contractors to truthfully report their types, the mechanism selects a contractor to be the winner whose minimum spanning tree has the minimum cost among minimum spanning trees of all DAGs. The winner's payment is defined as the negative second minimum cost. To incentivize the nodes to truthfully report their types, we first select the DAG with the minimum cost of minimum spanning tree to compute each node's proportion. Second, we use the Breadth-First-Search (BFS) algorithm~\cite{silvela2001breadth} to compute each node's depth and sort all nodes by their depths in descending order (with random tie-breaking). Third, following the above order, we compute the shortest path from the source to each node. Fourth, we update the costs on this shortest path to zero. Fifth, we compute each node's proportion, which is defined as the ratio of the cost of its shortest path to the sum of costs of all nodes' shortest paths. Finally, we select the edges with zero cost. 

\begin{definition}
The length of a directed path is the number of edges on it. For each node $i \in V$, its depth is defined as the length of the shortest directed path from the source to it. 
\end{definition}

Now we introduce additional notations. For each node $i$, let $S_i$ denote the set consisting of the source $s$, the nodes with the depths larger than $i$ and the nodes on their shortest paths. Let $d(i,S_i)$ be the minimum of costs from each node in $S_i$ to $i$, $E(i,S_i)$ be the set of all edges of the directed path corresponding to $d(i,S_i)$, and $V(i,S_i)$ be the set of all nodes of the directed path corresponding to $d(i,S_i)$. Let $j^*$ be the selected contractor, $c^{mst}_{j^*}$ be the cost of $j^*$'s MST, $k^*$ be the contractor the cost of whose MST is the second minimum, and $c^{mst}_{k^*}$ be the cost of $k^*$'s MST.

The proposed mechanism is formally described in Algorithm~\ref{proposedmechanism}. Line 1 to line 10 determine the winner of contractors and the payments of all contractors. Line 11 to line 21 compute each node's proportion and its cost share. Line 22 defines the set of selected edges.  

A running example is given in Example~\ref{proposedexample}. 

\begin{algorithm}[tb]
    \caption{The Shortest Path Based Mechanism}
    \label{proposedmechanism}
    \textbf{Input}: A report profile $\theta' \in \Theta$\\
    
    \textbf{Output}: The winner $h(\theta')$,
    the payments $p(\theta')$,

    the edges $f(\theta')$,
    the cost shares $x(\theta')$
    
    \begin{algorithmic}[1] 
        \STATE Initialize $A=\emptyset$ and $B=0$;
        \FOR{each contractor $j \in N$}
        \STATE Compute the MST of $G_j'$ using Prim's algorithm and compute its cost $c_j^{mst}$;
        \ENDFOR
        \STATE Set $j^*=argmin_{j \in N}c_j^{mst}$;
        \STATE Set $k^*=argmin_{j \neq j^*, j \in N}c_j^{mst}$;
        \STATE Set $h(\theta')=j^*$ and $p_{j^*}(\theta')=-c_{k^*}^{mst}$;
        \FOR{each contractor $j \in N \backslash \{h(\theta')\}$}
        \STATE Set $p_j(\theta')=0$; 
        \ENDFOR
        \STATE Sort nodes in descending order by their depths on $G_{j^*}'$ (assuming the sequence is $1,\cdots,|V|$);
        \STATE Set $S_1=\emptyset$;
        \FOR{each node $i=1,\cdots,|V|$}
        \STATE Compute $d(i,S_i)$, $E(i,S_i)$ and $V(i,S_i)$;
        \STATE Set $S_{i+1}=S_i \cup V(i,S_i)$;
        \STATE Set $A=A \cup E(i,S_i)$;
        \STATE Set $B=B+d(i,S_i)$;
        \ENDFOR  
        \FOR{each node $i=1,\cdots,|V|$}
        \STATE Set $x_i(\theta')=\frac{d(i,S_i)}{B}\cdot c_{k^*}^{mst}$;
        \ENDFOR
        \STATE Set $f(\theta')=A$;
        \STATE \textbf{return} $h(\theta')$, $p(\theta')$, $f(\theta')$, $x(\theta')$
        \end{algorithmic}
\end{algorithm}
\begin{figure}[htb]
    \centering
    \includegraphics[width=8.5cm]{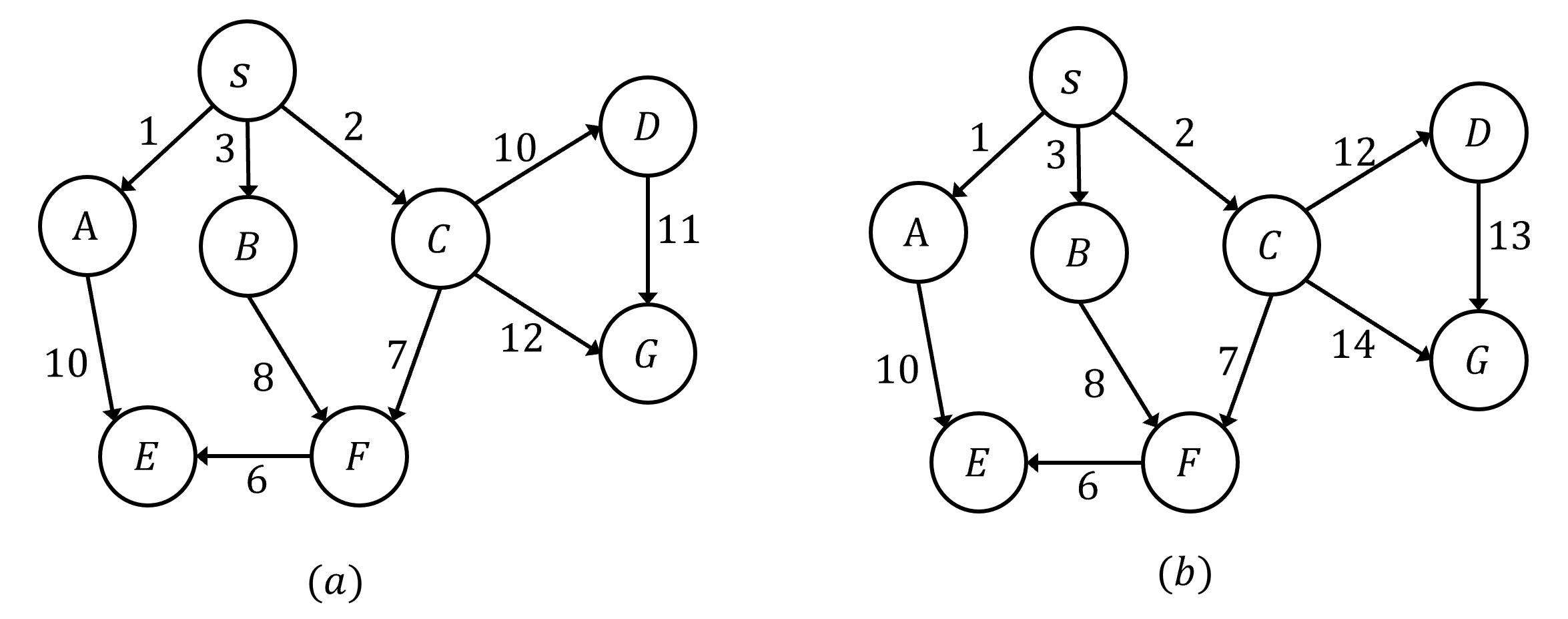}
    \caption{The $s$ represents the source, the letters in circles represent the nodes, and the numbers on the edges represent the costs of connections.}
    \label{pro0}
\end{figure}

\begin{figure}[htb]
    \centering
    \includegraphics[width=8.5cm]{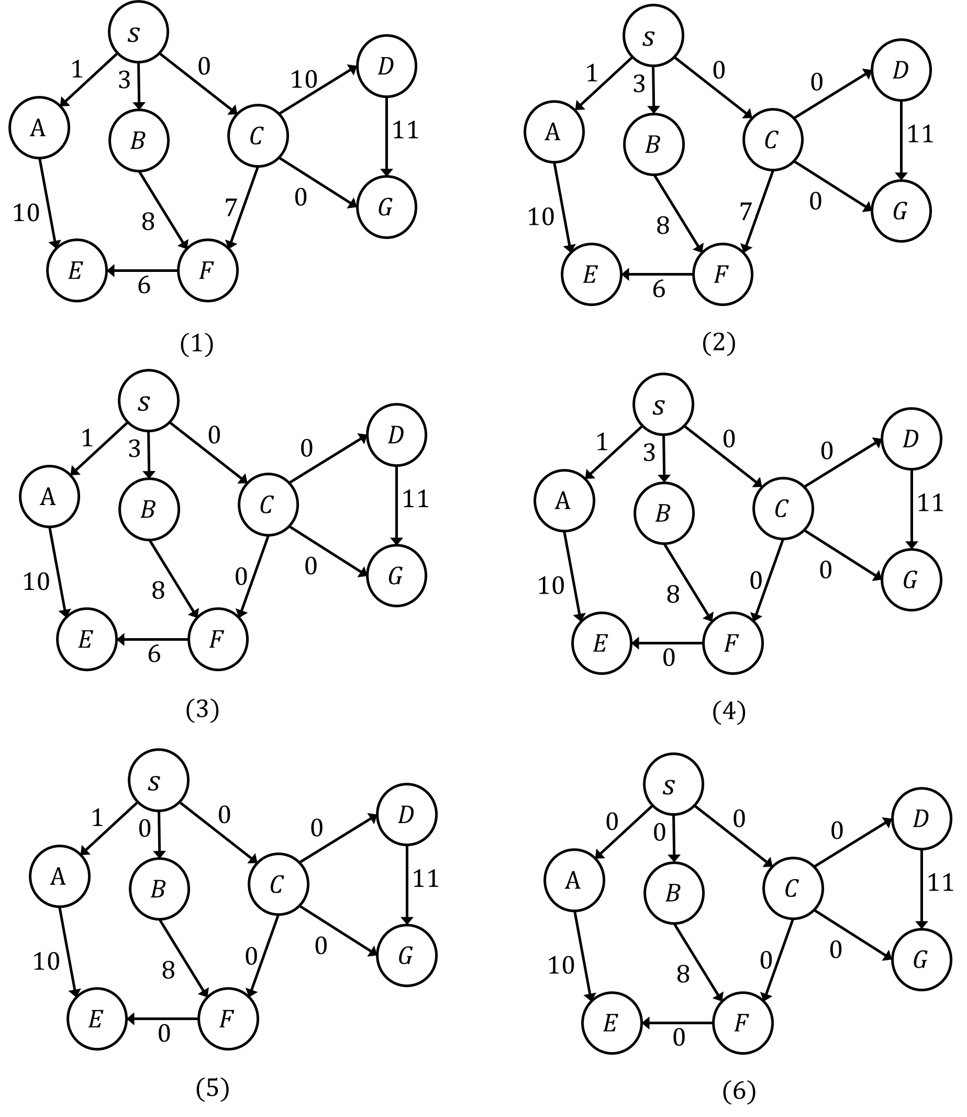}
    \caption{The process of updating the costs of edges}
    \label{pro4}
\end{figure}

\begin{example}
\label{proposedexample}
The DAGs generated by a report profile $\theta' \in \Theta$ are illustrated in Figure~\ref{pro0}, where $N=\{a,b\}$ and $V=\{A,B,C,D,E,F,G\}$. According to each contractor's report, we obtain that the costs of minimum spanning trees of $a$ and $b$ are 40 and 44 respectively. Therefore, we have $h(\theta')=a$, $p_a(\theta')=-44$ and $p_b(\theta')=0$. Next, we compute the cost shares of all nodes on the left DAG in Figure~\ref{pro0}. The nodes are ordered by their depths: $G,D,F,E,C,B,A$. For node $G$, we get $S_G=\{s\}$. The shortest directed path from the nodes in $S_G$ to $G$ is $s\rightarrow C \rightarrow G$. Thus we have $d(G,S_G)=2+12=14$. We set the costs of the edges $(s,C)$ and $(C,G)$ to zero. For node $D$, we get $S_D=\{s,C,G\}$. The shortest directed path from the nodes in $S_D$ to $D$ is $C \rightarrow D$. So we have $d(D,S_D)=10$. The cost of the edge $(C,D)$ is set to zero. Similarly, we obtain 
\begin{equation*}
    \begin{split}
    d(F,S_F)&=7, d(E,S_E)=6, \\
    d(C,S_C)&=0, d(B,S_B)=3, d(A,S_A)=1. \\    
    \end{split}
\end{equation*}
So, by the equation in Algorithm~\ref{proposedmechanism} $$x_i(\theta')=\frac{d(i,S_i)}{B}\cdot c_{k^*}^{mst},$$we obtain that the cost shares of nodes $G,D,F,E,C,B,A$ are respectively
\begin{equation*}
    \begin{split}
     x_G(\theta')&=\frac{14}{41}\cdot 44, x_D(\theta')=\frac{10}{41}\cdot 44, x_F(\theta')=\frac{7}{41}\cdot 44, \\
     x_E(\theta')&=\frac{6}{41}\cdot 44, x_C(\theta')=0, x_B(\theta')=\frac{3}{41}\cdot 44, x_A(\theta')=\frac{1}{41}\cdot 44.  \\ 
    \end{split}
\end{equation*}
Finally, we get the set of selected edges $$f(\theta')=\{(s,C),(s,B),(s,A),(C,D),(C,G),(C,F),(F,E)\}.$$The process of updating the costs of edges is illustrated in Figure~\ref{pro4} ((1)-(6)).
\end{example}

\subsection{Properties of the Shortest Path Based Mechanism}
In the following, we show the nice properties of the proposed mechanism. 

\begin{theorem}
The shortest path based mechanism satisfies truthfulness. 
\end{theorem}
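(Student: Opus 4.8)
\emph{Proof plan.} The two inequalities defining truthfulness govern disjoint unilateral deviations — one node shrinking its reported out-edge set to some $r_i'\subseteq r_i$, and one contractor perturbing its weight vector to some $c_j'$ — with the rest of the profile $\theta'_{-i}$ (resp.\ $\theta'_{-j}$) held fixed. I would therefore treat the contractor clause and the node clause independently.

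For the contractor clause I would run a second-price (Vickrey) argument. Freezing the node reports freezes the edge set $E'$, so only contractor $j$'s weights move; the winner is the contractor of least MST cost and is paid the second-least MST cost $c^{mst}_{k^*}$, a quantity determined by the other contractors alone once $j$ wins. Casing on whether $j$ wins truthfully, I would show: if $j$ loses truthfully, every deviation either keeps it a loser (utility $0$) or makes it win at a price at most its own true MST cost, hence at non-positive utility; if $j$ wins truthfully, any deviation that keeps it winning leaves the price $c^{mst}_{k^*}$ untouched, while any deviation that makes it lose drops $u_j$ to $0\le u_j(c_j,\theta'_{-j})$ by the individual-rationality argument (identical to that of Theorem~\ref{svir}). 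The step that needs genuine care is the realized cost $\sum_{(i,\ell)\in f}c^{j}_{(i,\ell)}$: since $f$ is read off the winner's \emph{reported} DAG $G'_{j^*}$, I must establish that a winning contractor cannot reshape $f$ into a truly cheaper edge set while keeping its MST cost, and hence its winning status, intact.

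The node clause is the heart of the theorem and is where the shortest-path construction is meant to earn its keep. Fixing all contractors and all other nodes, suppose $i$ cuts a nonempty set of its out-edges. I would first record the monotonicity that deleting edges only raises MST and shortest-path costs, so the factor $c^{mst}_{k^*}$ is non-decreasing. I would then try to prove the structural lemma that $i$'s own attachment cost $d(i,S_i)$ does not decrease: the edges $i$ deletes all emanate from $i$, so none of them ever lies on the path that attaches $i$ to the deeper structure, and their deletion can only erase zero-cost material that other nodes' already-processed paths had laid down near $i$. With $d(i,S_i)$ and $c^{mst}_{k^*}$ both non-decreasing, a suitable bound on $B=\sum_\ell d(\ell,S_\ell)$ would give $x_i(r_i,\theta'_{-i})\le x_i(r_i',\theta'_{-i})$.

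I expect this last monotonicity to be the main obstacle, for two interacting reasons. First, $i$'s depth is pinned down by the paths \emph{into} $i$ and is untouched by the cut, but every other node's depth can only grow, so nodes may be \emph{promoted} ahead of $i$ in the descending-depth order; this reshapes $S_i$ and the zero-cost material surrounding $i$, so neither the numerator $d(i,S_i)$ nor the denominator $B$ can be compared termwise, and I would instead need a global coupling between the pre- and post-deviation shortest-path forests. Second, the cut can change the minimizing contractor $j^*$, hence the very DAG $G'_{j^*}$ on which the proportions are computed; I would absorb this by proving the share inequality for each candidate reference DAG and checking that it survives the change of winner. Controlling the denominator $B$ — whose summands move up when descendants merely reroute but vanish when descendants disconnect — is the crux on which the whole node clause turns.
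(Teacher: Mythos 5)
Your contractor clause matches the paper's argument (a standard second-price case analysis), and the subtlety you flag about the realized cost $\sum_{(i,\ell)\in f(\theta')}c^{j}_{(i,\ell)}$ depending on the winner's reported weights is real, though the paper does not address it either. The genuine gap is in the node clause: what you submit there is a plan that names the obstacles --- monotonicity of $d(i,S_i)$, control of the denominator $B$, reordering by depth, change of winner --- without resolving any of them, and the global-coupling route you sketch is not the one that closes the argument. The paper instead splits into two cases that make the denominator problem disappear. Case one: node $i$ lies on the shortest path selected for some node processed before it. Then by the time $i$ is processed the edges of that path entering $i$ have already been zeroed and $i\in S_i$, so $d(i,S_i)=0$ and $x_i(\theta')=0$ under truthful reporting; since cost shares are always non-negative, any deviation can only weakly increase $i$'s share, and no analysis of $B$ or of the new ordering is needed. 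Case two: $i$ lies on no selected shortest path of an earlier-processed node, hence none of $i$'s outgoing edges is used in any selected path; cutting them then leaves every $d(j,S_j)$, and therefore both the numerator and the denominator of $i$'s proportion, unchanged, while the payment $c^{mst}_{k^*}$ can only weakly increase, so $x_i$ weakly increases. This case split, not a global bound on $B$, is the missing idea.

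That said, your worries are not baseless: even in the paper's second case, cutting an outgoing edge of $i$ that is not on any selected path can still alter other nodes' BFS depths (when the depth-minimizing path through $i$ differs from the cost-minimizing one) and can change which contractor wins, and the paper's proof passes over both points silently. So your instinct that the processing order and the identity of $j^*$ must be pinned down is sound, but as written your proposal does not constitute a proof of the node clause.
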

\begin{proof}
First, we prove that each contractor $i \in N$ will report all the weights truthfully. To prove this, we only need to show that when $i$ truthfully reports the weights, its utility is larger than or equal to the utility when it misreports the weights. 

When $i$ truthfully reports its weights, there are two cases.

\begin{itemize}
    \item $i = h(\theta')$. According to our mechanism, $i$'s utility is the difference between the second minimum of costs of MSTs and the minimum of costs of MSTs, which is non-negative. 
    
    If $i$ reports larger weights, there are two possibilities.
    \begin{itemize}
        \item $i \ne h(\theta')$. Then $i$'s utility is 0.  
        \item $i = h(\theta')$. Then the second minimum of costs of MSTs does not change and so does $i$'s utility. 
    \end{itemize}
    
If $i$ reports lower weights, then it is still the winner and its utility does not change.   
    \item $i \ne h(\theta')$. Then its utility is 0. 
    
    If $i$ reports larger weights, then it is still not the winner and its utility does not change. 
    
    If $i$ reports lower weights, there are two possibilities.
    \begin{itemize}
        \item $i \ne h(\theta')$. Then its utility is still 0.
        \item $i = h(\theta')$. Then its utility is negative. 
    \end{itemize}
\end{itemize}

So truthful report maximizes its utility.

Second, we prove that each node $i \in V$ will offer all its outgoing edges. 

When $i$ offers all its outgoing edges, its cost share is $$\frac{d(i,S_i)}{\sum_{i \in V}d(i,S_i)}\cdot c_{k^*}^{mst}.$$

If node $i$ does not offer all its outgoing edges, there are two possibilities.
\begin{itemize}
    \item If there exists a node for which $i$ is on the directed path from the source to that node, then the cost share of $i$ is 0. Thus, the cost share of $i$ will weakly increase. 
    \item Otherwise, we have that $d(j,S_j) (j \neq i)$ remains unchanged. By the definition of $d(i,S_i)$, it also does not change. So $i$'s proportion $$\frac{d(i,S_i)}{\sum_{i \in V}d(i,S_i)}$$ does not change. In addition, $c_{k^*}^{mst}$ will weakly increase. Hence, the cost share of $i$ will weakly increase. 
\end{itemize}

So truthful report minimizes its cost share.

Putting the above analysis together, the proposed mechanism satisfies truthfulness.
\end{proof}

\begin{theorem}
The shortest path based mechanism satisfies budget balance.
\end{theorem}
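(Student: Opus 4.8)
The plan is to mirror the budget balance arguments already given for the Shapley value and Bird rule based mechanisms, since all three mechanisms redistribute the same quantity $c_{k^*}^{mst}$ among the nodes via proportions that sum to one. Budget balance requires $\sum_{i \in V}x_i(\theta')=\sum_{i \in N}|p_i(\theta')|$ for every $\theta' \in \Theta$, so I would evaluate the two sides separately and then check that they coincide.

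First I would compute the right-hand side. By the contractor payment rule of Algorithm~\ref{proposedmechanism}, only the winner $j^*$ receives a nonzero payment, namely $p_{j^*}(\theta')=-c_{k^*}^{mst}$, while every other contractor is paid $0$; hence $\sum_{i \in N}|p_i(\theta')|=|p_{j^*}(\theta')|=c_{k^*}^{mst}$, the cost of the second minimum spanning tree. Next I would compute the left-hand side. From the cost-share rule, each node's share is $x_i(\theta')=\frac{d(i,S_i)}{B}\cdot c_{k^*}^{mst}$, and from the accumulation $B \leftarrow B+d(i,S_i)$ performed over the loop the normalizer satisfies $B=\sum_{i \in V}d(i,S_i)$. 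Factoring the common constant $c_{k^*}^{mst}/B$ out of the sum then gives
\begin{equation*}
\sum_{i \in V}x_i(\theta')=\frac{c_{k^*}^{mst}}{B}\sum_{i \in V}d(i,S_i)=\frac{c_{k^*}^{mst}}{B}\cdot B=c_{k^*}^{mst},
\end{equation*}
so the two sides agree and the claim follows.

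The argument is essentially a one-line computation, so I do not expect a serious obstacle; the only point requiring care is to read off from the loop that the denominator $B$ is exactly the sum $\sum_{i \in V}d(i,S_i)$ of the numerators, which is precisely what forces the proportions $d(i,S_i)/B$ to sum to one. I would also remark that this step implicitly uses $B>0$, which holds as soon as at least one node lies at positive distance from the source; the degenerate all-zero-cost instance can be dismissed separately, since there both sides are trivially zero.
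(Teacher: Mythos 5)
Your proof is correct and follows essentially the same route as the paper's: read off $B=\sum_{i\in V}d(i,S_i)$ from the loop, factor $c_{k^*}^{mst}/B$ out of the sum of cost shares, and match it against the single nonzero payment $|p_{j^*}(\theta')|=c_{k^*}^{mst}$. Your additional remark about the degenerate case $B=0$ is a point the paper glosses over, but it does not change the substance of the argument.
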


\begin{proof}
We need to prove that given $\theta' \in \Theta$, the sum of all nodes' cost share equals the second minimum of costs of minimum spanning trees, i.e. $$\sum_{i \in V}x_i(\theta')=c_{k^*}^{mst}.$$According to line 17 in Algorithm~\ref{proposedmechanism}, we have $$\sum_{i \in V}d(i,S_i)=B.$$According to line 20 in Algorithm~\ref{proposedmechanism}, we get $$\sum_{i \in V}x_i(\theta')=\frac{\sum_{i \in V}d(i,S_i)}{B}\cdot c_{k^*}^{mst}=c_{k^*}^{mst}.$$ 
\end{proof}

\begin{theorem}
The shortest path based mechanism satisfies individual rationality. 
\end{theorem}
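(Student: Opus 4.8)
The plan is to follow the case split used in the individual rationality arguments already given for the two benchmark mechanisms (e.g.\ Theorem~\ref{svir}). Fix a report profile $\theta'$ and a contractor $k \in N$, and distinguish whether $k$ is the winner. If $k \neq h(\theta')$, then by the definition of the utility function $u_k(\theta') = 0 \geq 0$ and nothing is required, so the entire content sits in the case $k = h(\theta') = j^*$. Here the payment satisfies $|p_{j^*}(\theta')| = c^{mst}_{k^*}$, so individual rationality amounts to showing that the cost the winner incurs by building the selected edge set, namely $\sum_{(i,j) \in f(\theta')} c^{j^*}_{(i,j)}$, never exceeds the payment $c^{mst}_{k^*}$.

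First I would pin down the winner's building cost. Reusing the edge accounting implicit in the budget-balance proof, each edge enters $A = f(\theta')$ exactly once, at the moment its endpoint is first connected, and is charged at its original $j^*$-weight (edges already in $A$ contribute $0$ to all later $d(\cdot)$ computations because their cost has been set to zero). Hence $\sum_{(i,j) \in f(\theta')} c^{j^*}_{(i,j)} = \sum_{i \in V} d(i,S_i) = B$, so the winner's utility is exactly $c^{mst}_{k^*} - B$. Individual rationality therefore reduces to the single inequality $B \leq c^{mst}_{k^*}$.

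This inequality $B \leq c^{mst}_{k^*}$ is where I expect the real difficulty, and it is genuinely harder than the analogous step for the benchmark mechanisms. There the winner builds its own MST, so its cost is exactly $c^{mst}_{j^*}$ and the conclusion is immediate from $c^{mst}_{j^*} \leq c^{mst}_{k^*}$. Here, by contrast, $f(\theta')$ is a shortest-path structure of $G_{j^*}'$, which is only a connected spanning subgraph and hence gives the \emph{wrong-direction} bound $B \geq c^{mst}_{j^*}$; indeed in Example~\ref{proposedexample} one has $B = 41 > 40 = c^{mst}_{j^*}$. Thus $c^{mst}_{j^*} \leq c^{mst}_{k^*}$ does not by itself yield $B \leq c^{mst}_{k^*}$, and I would instead have to control the shortest-path overhead $B - c^{mst}_{j^*}$ against the MST gap $c^{mst}_{k^*} - c^{mst}_{j^*}$. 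My attempt would be an exchange/comparison argument relating the greedy structure built on $G_{j^*}'$ to a spanning tree of $G_{k^*}'$ edge by edge, but I would flag this as the delicate point of the whole theorem: the inequality is not automatic from the selection rule, and it appears to come under pressure when two contractors report nearly identical weights (forcing $c^{mst}_{k^*} \approx c^{mst}_{j^*}$ while leaving the shortest-path overhead intact), so the argument here seems to need more than the bounds already available.
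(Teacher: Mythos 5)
You are right to stop where you stop: the inequality $B \le c^{mst}_{k^*}$ is exactly what is needed, and it is exactly what the paper never establishes. The paper's own proof handles the winner in one sentence by declaring its utility to be $c^{mst}_{k^*}-c^{mst}_{j^*}$, which tacitly takes the winner's building cost to be $c^{mst}_{j^*}$. That is correct for the two benchmark mechanisms, where $f(\theta')$ is the MST of $G_{j^*}'$, but in Algorithm~\ref{proposedmechanism} the last line sets $f(\theta')=A$, the union of the selected shortest paths, whose total cost is $B=\sum_{i\in V}d(i,S_i)$. Your edge accounting showing $\sum_{(i,j)\in f(\theta')}c^{j^*}_{(i,j)}=B$ and the observation that $B\ge c^{mst}_{j^*}$ with strict inequality possible (e.g.\ $41>40$ in Example~\ref{proposedexample}) are both correct, so the paper's argument is essentially the benchmark IR proof applied to the wrong edge set.

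Moreover, the exchange argument you were hoping to construct cannot exist, because the claim fails under the paper's own definitions. Take the instance of Example~\ref{proposedexample} and let contractor $b$ report the same weights as contractor $a$ (the left DAG), with all nodes reporting truthfully. Then $c^{mst}_{j^*}=c^{mst}_{k^*}=40$, the winner is paid $40$, and it must build the edge set $A$ whose true cost is $B=41$, giving utility $-1$. So the theorem can only be saved by an extra hypothesis (a gap between the two smallest MST costs at least as large as the shortest-path overhead, i.e.\ $c^{mst}_{k^*}\ge B$) or by changing the mechanism (for instance paying the winner $\max\{B,\,c^{mst}_{k^*}\}$, or having it build the MST of $G_{j^*}'$ rather than $A$, each of which disturbs other properties). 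Your proposal is therefore not a completed proof, but the step you flag as delicate is not a gap in your reasoning so much as a counterexample to the statement; you have diagnosed the situation correctly, and more carefully than the paper does.
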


\begin{proof}
According to the proposed mechanism, the utility of a contractor is zero if it is not the winner. For the winner, since the mechanism pays it the second minimum of costs of minimum spanning trees, its utility (the difference between the second minimum of costs of MSTs and the minimum of costs of MSTs) is non-negative. Therefore, the utility of each contractor is non-negative. 
\end{proof}

\begin{theorem}
The shortest path based mechanism satisfies positiveness. 
\end{theorem}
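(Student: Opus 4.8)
The plan is to prove positiveness by direct verification from the cost share formula in Algorithm~\ref{proposedmechanism}, mirroring the structure of the positiveness proofs already given for the Shapley value and Bird rule based mechanisms. Recall that line~20 sets
$$x_i(\theta')=\frac{d(i,S_i)}{B}\cdot c_{k^*}^{mst},$$
so it suffices to argue that each of the three quantities $d(i,S_i)$, $B$, and $c_{k^*}^{mst}$ is non-negative, and that the denominator $B$ is in fact strictly positive so that the ratio is well defined.

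First I would establish that $d(i,S_i)\geq 0$ for every node $i$. By definition, $d(i,S_i)$ is the minimum total cost of a directed path from some node in $S_i$ to $i$, i.e.\ the cost of a shortest path. Since each edge weight reported by a contractor is a connection cost and is therefore non-negative, every such path has non-negative total cost, and hence its minimum $d(i,S_i)$ is non-negative. The one point that needs care is that the algorithm repeatedly updates some edge costs to zero (lines~13--17, as illustrated in the running example where $d(C,S_C)=0$): I would note that setting a cost to zero preserves non-negativity, so no negative weights are ever introduced and all subsequently computed shortest path costs remain non-negative. This is the only genuinely substantive step, and it is the one I expect to be the main (though mild) obstacle, since positiveness rests entirely on the invariant that edge costs stay non-negative throughout the iterative zeroing process.

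Next I would observe that $B=\sum_{i\in V}d(i,S_i)$ by line~17, so $B$ is a sum of non-negative terms and is therefore non-negative; it is strictly positive whenever at least one node has a positive shortest path cost, which holds for any non-degenerate instance in which the total cost of connectivity is positive. Finally, as already used in the earlier positiveness arguments, the cost of the second minimum spanning tree satisfies $c_{k^*}^{mst}>0$.

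Combining these facts, $x_i(\theta')$ is a product of a non-negative numerator, a positive denominator, and the positive factor $c_{k^*}^{mst}$, and is therefore non-negative for every $i\in V$ and every $\theta'\in\Theta$. This establishes that the shortest path based mechanism satisfies positiveness.
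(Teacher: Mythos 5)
Your proposal is correct and takes essentially the same route as the paper: the cost share is the ratio $d(i,S_i)/B$ times $c_{k^*}^{mst}$, and all quantities involved are non-negative. You fill in two details the paper leaves implicit (that the iterative zeroing of edge costs preserves non-negativity of $d(i,S_i)$, and that the denominator $B$ is positive in non-degenerate instances), but the core argument is the same.
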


\begin{proof}
According to the proposed mechanism, given $\theta' \in \Theta$, the cost share of each $i \in V$ is computed as $$x_i(\theta')=\frac{d(i,S_i)}{\sum_{i \in V}d(i,S_i)}\cdot c_{k^*}^{mst}.$$Since both $d(i,S_i)$ and $c_{k^*}^{mst}$ are non-negative, the cost share of $i$ is also non-negative. 
\end{proof}

\begin{theorem}
\label{spmran}
The shortest path based mechanism satisfies ranking.
\end{theorem}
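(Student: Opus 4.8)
The plan is to reduce the ranking inequality to a single comparison between the two connection costs $d(i,S_i)$ and $d(j,S_j)$. First I would note that the cost shares produced in line~20 of Algorithm~\ref{proposedmechanism} are $x_i(\theta')=\frac{d(i,S_i)}{B}\,c_{k^*}^{mst}$ and $x_j(\theta')=\frac{d(j,S_j)}{B}\,c_{k^*}^{mst}$, sharing the common strictly positive factors $B$ and $c_{k^*}^{mst}$. Since the shares are evaluated from the weights of the winner $m=h(\theta')=j^{*}$ (these are the only weights that determine $x$), it suffices to prove $d(i,S_i)<d(j,S_j)$.

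Second I would show that $i$ and $j$ occupy the same depth. Writing $P=a_i(\theta')\setminus\{j\}=a_j(\theta')\setminus\{i\}$ for the common parents and $\mu=\min_{k\in P}\mathrm{depth}(k)$, the recursion $\mathrm{depth}(v)=1+\min_{k\in a_v}\mathrm{depth}(k)$ gives $\mathrm{depth}(i)=\mathrm{depth}(j)=1+\mu$ in every case: when one of the two is a parent of the other, that extra parent has depth $1+\mu$, which exceeds $\mu$ and hence never attains the minimum in the recursion. Because $S_i$ consists only of the source, the strictly deeper nodes, and their already-zeroed shortest paths, equal depth means $i$ and $j$ face the same frontier; moreover any equal-depth node inserted earlier by the tie-breaking cannot be a proper ancestor of $i$ or $j$ (a proper ancestor has strictly smaller depth), so such insertions leave the two connection costs unchanged. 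This lets me treat $d(i,S_i)$ and $d(j,S_j)$ as computed against common frontier distances $\pi(k)=\mathrm{dist}(S,k)$ to the shared parents $k\in P$.

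Third I would decompose each cost by its final edge, $d(i,S_i)=\min_{k\in a_i}[\pi(k)+c^{m}_{(k,i)}]$ and $d(j,S_j)=\min_{k\in a_j}[\pi(k)+c^{m}_{(k,j)}]$, and restrict to the common parents. The hypothesis $c^{m}_{(k,i)}<c^{m}_{(k,j)}$ for all $k\in P$ yields the termwise strict domination $\pi(k)+c^{m}_{(k,i)}<\pi(k)+c^{m}_{(k,j)}$, and the elementary fact that $f<g$ pointwise on a finite nonempty set implies $\min f<\min g$ (evaluate $f$ at a minimizer of $g$) gives $\min_{k\in P}[\pi(k)+c^{m}_{(k,i)}]<\min_{k\in P}[\pi(k)+c^{m}_{(k,j)}]$, that is $d(i,S_i)<d(j,S_j)$ in the principal case $a_i=a_j=P$. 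Multiplying back by $c_{k^*}^{mst}/B>0$ then gives $x_i(\theta')<x_j(\theta')$.

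I expect the main obstacle to be guaranteeing that $i$ and $j$ really are compared against identical frontier distances $\pi(k)$, which is exactly where the descending-depth order, the random tie-breaking among equal-depth nodes, and the possibility that one of $i,j$ is a parent of the other all interact. If $j$ is the extra parent of $i$, the extra term can only lower $d(i,S_i)$ and the inequality survives; the delicate direction is when $i$ is a parent of $j$, since the additional attachment then available to $j$ may shrink $d(j,S_j)$, and controlling this requires carefully tracking which shortest paths have already been zeroed by the time $j$ is processed. The clean way to secure the argument is the depth invariant of the second step---no equal-depth node is a proper ancestor of either node---so that the incremental updates $S_{i+1}=S_i\cup V(i,S_i)$ never alter a distance to a shared parent between the processing of $i$ and of $j$; once this invariant is pinned down, the termwise-minimum comparison is routine.
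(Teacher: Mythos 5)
Your overall route is the same as the paper's: cancel the common positive factor $c_{k^*}^{mst}/B$, argue that equal depth gives $i$ and $j$ the same frontier, and conclude $d(i,S_i)<d(j,S_j)$ by a pointwise comparison of the minimands over the common parent set. In the principal case $a_i(\theta')=a_j(\theta')=P$ your argument is correct and in fact more careful than the paper's, which simply asserts $S_i=S_j$ and jumps to the conclusion; your observations that equal-depth nodes cannot be proper ancestors of one another and that $\min f<\min g$ follows from pointwise strict domination are exactly the missing glue.

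However, the case you yourself flag as delicate --- $i\in a_j(\theta')$ --- is not closed by the ``depth invariant'' you propose. That invariant only guarantees that the frontier distances $\pi(k)$ to the \emph{shared} parents $k\in P$ are the same when $i$ and $j$ are processed; the real obstruction is that $d(j,S_j)=\min_{k\in a_j}[\pi(k)+c^m_{(k,j)}]$ then contains the extra minimand $\pi(i)+c^m_{(i,j)}$, and the ranking hypothesis places no constraint whatsoever on $c^m_{(i,j)}$. Concretely, take $V=\{i,j\}$ with edges $(s,i)$, $(s,j)$, $(i,j)$ of costs $1$, $100$, $0.5$ in the winner's report: the hypothesis $c^m_{(s,i)}<c^m_{(s,j)}$ holds and both nodes have depth $1$, but if the tie-break processes $i$ first then $d(i,S_i)=1$ while $d(j,S_j)=0.5$, so $x_i(\theta')>x_j(\theta')$ and ranking fails. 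So this branch cannot be repaired by tracking zeroed edges more carefully; it is a genuine counterexample to the theorem as stated (the definition of ranking explicitly admits $i\in a_j$ via the $a_j(\theta')\setminus\{i\}$ clause). The paper's own proof has the identical gap --- it silently restricts the minimization to $a_i(\theta')\setminus\{j\}$ --- so your proposal matches the paper's reasoning where it works and correctly isolates, but does not resolve, the step where both arguments break down.
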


\begin{proof}
We need to show that, given $\theta' \in \Theta$, the winner of contractors $m \in N$, and $i,j \in V$ with $a_i(\theta')\backslash \{j\}=a_j(\theta')\backslash \{i\}$, it holds that $c_{(k,i)}^m < c_{(k,j)}^m$ ($\forall k \in a_i(\theta') \backslash \{j\}$) implies $x_i(\theta')<x_j(\theta')$. 

From Algorithm~\ref{proposedmechanism}, we know $$x_i(\theta')=\frac{d(i,S_i)}{\sum_{l \in V}d(l,S_l)}\cdot c_{k^*}^{mst}$$ and $$x_j(\theta')=\frac{d(j,S_j)}{\sum_{l \in V}d(l,S_l)}\cdot c_{k^*}^{mst}.$$We only need to compare $d(i,S_i)$ with $d(j,S_j)$. The $d(i,S_i)$ represents the minimum of costs from each node in $S_i$ to $i$ and the $d(j,S_j)$ represents the minimum of costs from each node in $S_j$ to $j$. Due to the definition of ranking, the depth of $i$ equals the depth of $j$. Therefore, we have $S_i=S_j$. Since $c_{(k,i)}^m < c_{(k,j)}^m$ ($\forall k \in a_i(\theta') \backslash \{j\}$), we have $d(i,S_i)<d(j,S_j)$. Therefore, it holds that $x_i(\theta')<x_j(\theta')$.
\end{proof}

\begin{theorem}
The shortest path based mechanism satisfies symmetry.
\end{theorem}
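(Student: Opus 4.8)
The plan is to follow the same line of reasoning as the proof of ranking (Theorem~\ref{spmran}), replacing the strict inequality by an equality throughout. By line 20 of Algorithm~\ref{proposedmechanism}, for the winning contractor $m = h(\theta')$ we have
$$x_i(\theta')=\frac{d(i,S_i)}{\sum_{l \in V}d(l,S_l)}\cdot c_{k^*}^{mst} \quad\text{and}\quad x_j(\theta')=\frac{d(j,S_j)}{\sum_{l \in V}d(l,S_l)}\cdot c_{k^*}^{mst}.$$
Since the denominator $\sum_{l \in V}d(l,S_l)$ and the factor $c_{k^*}^{mst}$ are common to both expressions, it suffices to establish that $d(i,S_i)=d(j,S_j)$; the conclusion $x_i(\theta')=x_j(\theta')$ then follows immediately.

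Next I would argue that $i$ and $j$ occupy the same level of the DAG. Because $a_i(\theta')\backslash\{j\}=a_j(\theta')\backslash\{i\}$, and a node's depth is one plus the minimum depth over its parents, $i$ and $j$ share the same depth; in particular this rules out one of them being a parent of the other, so their parent sets genuinely coincide. Since nodes are processed in descending order of depth and $S_i$, $S_j$ are determined by the strictly deeper nodes together with their shortest paths, we obtain $S_i=S_j$; write $S$ for this common set. It then remains to compare the minimum-cost path from $S$ to $i$ with that from $S$ to $j$. Every such path reaches its target through a common parent $k\in a_i(\theta')\backslash\{j\}$, and the cost of reaching any fixed $k$ from $S$ in the (partially zeroed) graph is identical for the two computations. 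As $c_{(k,i)}^m=c_{(k,j)}^m$ for every such $k$, the final-edge costs match as well, so the two minimizations return the same value: $d(i,S_i)=d(j,S_j)$.

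The step I expect to be the main obstacle is justifying $S_i=S_j$ cleanly, since the algorithm builds the sets $S_i$ incrementally and uses random tie-breaking among nodes of equal depth. The point to nail down is that two nodes at the same depth never enter each other's $S$ set: $S_i$ depends only on nodes strictly deeper than $i$ (and their shortest paths), so the arbitrary order in which $i$ and $j$ happen to be processed is irrelevant to the sets against which their distances are measured. Once this is secured, the equality of distances, and hence of cost shares, follows from the equality of the relevant edge weights exactly as above, which completes the proof.
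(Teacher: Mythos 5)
Your argument is essentially identical to the paper's own proof: reduce to comparing $d(i,S_i)$ with $d(j,S_j)$, use the equal parent sets to conclude equal depths and hence $S_i=S_j$, and then use $c_{(k,i)}^m=c_{(k,j)}^m$ to get equal distances and equal cost shares. If anything, you are more careful than the paper on the one delicate point (that the random tie-breaking among equal-depth nodes does not make $i$ enter $S_j$ or vice versa, so $S_i=S_j$ genuinely holds), which the paper simply asserts.
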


\begin{proof}
We need to show that, given $\theta' \in \Theta$, the winner of contractors $m \in N$, and $i,j \in V$ with $a_i(\theta')\backslash \{j\}=a_j(\theta')\backslash \{i\}$, it holds that $c_{(k,i)}^m = c_{(k,j)}^m$ ($\forall k \in a_i(\theta') \backslash \{j\}$) implies $x_i(\theta')=x_j(\theta')$. 

From Algorithm~\ref{proposedmechanism}, we know $$x_i(\theta')=\frac{d(i,S_i)}{\sum_{l \in V}d(l,S_l)}\cdot c_{k^*}^{mst}$$ and $$x_j(\theta')=\frac{d(j,S_j)}{\sum_{l \in V}d(l,S_l)}\cdot c_{k^*}^{mst}.$$The $d(i,S_i)$ represents the minimum of costs from each node in $S_i$ to $i$ and the $d(j,S_j)$ represents the minimum of costs from each node in $S_j$ to $j$. Since the depth of $i$ equals the depth of $j$, it follows that $S_i=S_j$. Since $c_{(k,i)}^m = c_{(k,j)}^m$ ($\forall k \in a_i(\theta') \backslash \{j\}$), we have $d(i,S_i)=d(j,S_j)$. Further, we infer that $x_i(\theta')=x_j(\theta')$.
\end{proof}

\section{Conclusions}
For the first time, we study how to design a cost sharing mechanism to prevent the agents' strategic behaviors, i.e. the contractors' misreporting the weights of edges and the nodes' cutting their outgoing edges. We model the two strategic behaviors and define key properties that cost sharing mechanisms satisfy. Then we show that the cost sharing benchmark mechanisms based on Shapley value and Bird rule cannot satisfy truthfulness. Finally, we propose a novel cost sharing mechanism based on shortest path and show it satisfies truthfulness and other properties.

In the future, we try to establish sufficient and necessary conditions for a cost sharing mechanism to satisfy truthfulness.  

\begin{acks}
This work is supported by Science and Technology Commission of Shanghai Municipality (No. 23010503000 and No. 22ZR1442200), and Shanghai Frontiers Science Center of Human-centered Artificial Intelligence (ShangHAI).
\end{acks}

\bibliographystyle{ACM-Reference-Format}
\bibliography{sample-base}


\begin{thebibliography}{29}


\ifx \showCODEN    \undefined \def \showCODEN     #1{\unskip}     \fi
\ifx \showDOI      \undefined \def \showDOI       #1{#1}\fi
\ifx \showISBNx    \undefined \def \showISBNx     #1{\unskip}     \fi
\ifx \showISBNxiii \undefined \def \showISBNxiii  #1{\unskip}     \fi
\ifx \showISSN     \undefined \def \showISSN      #1{\unskip}     \fi
\ifx \showLCCN     \undefined \def \showLCCN      #1{\unskip}     \fi
\ifx \shownote     \undefined \def \shownote      #1{#1}          \fi
\ifx \showarticletitle \undefined \def \showarticletitle #1{#1}   \fi
\ifx \showURL      \undefined \def \showURL       {\relax}        \fi
\providecommand\bibfield[2]{#2}
\providecommand\bibinfo[2]{#2}
\providecommand\natexlab[1]{#1}
\providecommand\showeprint[2][]{arXiv:#2}

\bibitem[Berganti{\~n}os and G{\'o}mez-R{\'u}a(2010)]%
        {bergantinos2010minimum}
\bibfield{author}{\bibinfo{person}{Gustavo Berganti{\~n}os} {and}
  \bibinfo{person}{Mar{\'\i}a G{\'o}mez-R{\'u}a}.}
  \bibinfo{year}{2010}\natexlab{}.
\newblock \showarticletitle{Minimum cost spanning tree problems with groups}.
\newblock \bibinfo{journal}{\emph{Economic Theory}} \bibinfo{volume}{43},
  \bibinfo{number}{2} (\bibinfo{year}{2010}), \bibinfo{pages}{227--262}.
\newblock


\bibitem[Berganti{\~n}os and Lorenzo(2004)]%
        {bergantinos2004non}
\bibfield{author}{\bibinfo{person}{Gustavo Berganti{\~n}os} {and}
  \bibinfo{person}{Leticia Lorenzo}.} \bibinfo{year}{2004}\natexlab{}.
\newblock \showarticletitle{A non-cooperative approach to the cost spanning
  tree problem}.
\newblock \bibinfo{journal}{\emph{Mathematical Methods of Operations Research}}
  \bibinfo{volume}{59}, \bibinfo{number}{3} (\bibinfo{year}{2004}),
  \bibinfo{pages}{393--403}.
\newblock


\bibitem[Berganti{\~n}os and Vidal-Puga(2007a)]%
        {bergantinos2007fair}
\bibfield{author}{\bibinfo{person}{Gustavo Berganti{\~n}os} {and}
  \bibinfo{person}{Juan~J Vidal-Puga}.} \bibinfo{year}{2007}\natexlab{a}.
\newblock \showarticletitle{A fair rule in minimum cost spanning tree
  problems}.
\newblock \bibinfo{journal}{\emph{Journal of Economic Theory}}
  \bibinfo{volume}{137}, \bibinfo{number}{1} (\bibinfo{year}{2007}),
  \bibinfo{pages}{326--352}.
\newblock


\bibitem[Berganti{\~n}os and Vidal-Puga(2007b)]%
        {bergantinos2007optimistic}
\bibfield{author}{\bibinfo{person}{Gustavo Berganti{\~n}os} {and}
  \bibinfo{person}{Juan~J Vidal-Puga}.} \bibinfo{year}{2007}\natexlab{b}.
\newblock \showarticletitle{The optimistic TU game in minimum cost spanning
  tree problems}.
\newblock \bibinfo{journal}{\emph{International Journal of Game Theory}}
  \bibinfo{volume}{36}, \bibinfo{number}{2} (\bibinfo{year}{2007}),
  \bibinfo{pages}{223--239}.
\newblock


\bibitem[Bird(1976)]%
        {bird1976cost}
\bibfield{author}{\bibinfo{person}{Charles~G Bird}.}
  \bibinfo{year}{1976}\natexlab{}.
\newblock \showarticletitle{On cost allocation for a spanning tree: a game
  theoretic approach}.
\newblock \bibinfo{journal}{\emph{Networks}} \bibinfo{volume}{6},
  \bibinfo{number}{4} (\bibinfo{year}{1976}), \bibinfo{pages}{335--350}.
\newblock


\bibitem[Claus and Kleitman(1973)]%
        {claus1973cost}
\bibfield{author}{\bibinfo{person}{Armin Claus} {and} \bibinfo{person}{Daniel~J
  Kleitman}.} \bibinfo{year}{1973}\natexlab{}.
\newblock \showarticletitle{Cost allocation for a spanning tree}.
\newblock \bibinfo{journal}{\emph{Networks}} \bibinfo{volume}{3},
  \bibinfo{number}{4} (\bibinfo{year}{1973}), \bibinfo{pages}{289--304}.
\newblock


\bibitem[Dutta and Kar(2004)]%
        {dutta2004cost}
\bibfield{author}{\bibinfo{person}{Bhaskar Dutta} {and}
  \bibinfo{person}{Anirban Kar}.} \bibinfo{year}{2004}\natexlab{}.
\newblock \showarticletitle{Cost monotonicity, consistency and minimum cost
  spanning tree games}.
\newblock \bibinfo{journal}{\emph{Games and Economic Behavior}}
  \bibinfo{volume}{48}, \bibinfo{number}{2} (\bibinfo{year}{2004}),
  \bibinfo{pages}{223--248}.
\newblock


\bibitem[Feigenbaum et~al\mbox{.}(2001)]%
        {feigenbaum2001sharing}
\bibfield{author}{\bibinfo{person}{Joan Feigenbaum},
  \bibinfo{person}{Christos~H Papadimitriou}, {and} \bibinfo{person}{Scott
  Shenker}.} \bibinfo{year}{2001}\natexlab{}.
\newblock \showarticletitle{Sharing the cost of multicast transmissions}.
\newblock \bibinfo{journal}{\emph{J. Comput. System Sci.}}
  \bibinfo{volume}{63} (\bibinfo{year}{2001}), \bibinfo{pages}{21--41}.
\newblock


\bibitem[Freeman et~al\mbox{.}(2016)]%
        {10.1007/978-3-662-54110-4_25}
\bibfield{author}{\bibinfo{person}{Rupert Freeman}, \bibinfo{person}{Samuel
  Haney}, {and} \bibinfo{person}{Debmalya Panigrahi}.}
  \bibinfo{year}{2016}\natexlab{}.
\newblock \showarticletitle{On the price of stability of undirected multicast
  games}. In \bibinfo{booktitle}{\emph{Proceedings of the 12th International
  Conference on Web and Internet Economics}}. \bibinfo{pages}{354–368}.
\newblock


\bibitem[G{\'o}mez-R{\'u}a and Vidal-Puga(2011)]%
        {gomez2011merge}
\bibfield{author}{\bibinfo{person}{Mar{\'\i}a G{\'o}mez-R{\'u}a} {and}
  \bibinfo{person}{Juan Vidal-Puga}.} \bibinfo{year}{2011}\natexlab{}.
\newblock \showarticletitle{Merge-proofness in minimum cost spanning tree
  problems}.
\newblock \bibinfo{journal}{\emph{International Journal of Game Theory}}
  \bibinfo{volume}{40}, \bibinfo{number}{2} (\bibinfo{year}{2011}),
  \bibinfo{pages}{309--329}.
\newblock


\bibitem[G{\'o}mez-R{\'u}a and Vidal-Puga(2017)]%
        {gomez2017monotonic}
\bibfield{author}{\bibinfo{person}{Mar{\'\i}a G{\'o}mez-R{\'u}a} {and}
  \bibinfo{person}{Juan Vidal-Puga}.} \bibinfo{year}{2017}\natexlab{}.
\newblock \showarticletitle{A monotonic and merge-proof rule in minimum cost
  spanning tree situations}.
\newblock \bibinfo{journal}{\emph{Economic Theory}} \bibinfo{volume}{63},
  \bibinfo{number}{3} (\bibinfo{year}{2017}), \bibinfo{pages}{813--826}.
\newblock


\bibitem[Granot and Huberman(1981)]%
        {granot1981minimum}
\bibfield{author}{\bibinfo{person}{Daniel Granot} {and} \bibinfo{person}{Gur
  Huberman}.} \bibinfo{year}{1981}\natexlab{}.
\newblock \showarticletitle{Minimum cost spanning tree games}.
\newblock \bibinfo{journal}{\emph{Mathematical Programming}}
  \bibinfo{volume}{21}, \bibinfo{number}{1} (\bibinfo{year}{1981}),
  \bibinfo{pages}{1--18}.
\newblock


\bibitem[Groves(1973)]%
        {groves1973incentives}
\bibfield{author}{\bibinfo{person}{Theodore Groves}.}
  \bibinfo{year}{1973}\natexlab{}.
\newblock \showarticletitle{Incentives in teams}.
\newblock \bibinfo{journal}{\emph{Econometrica: Journal of the Econometric
  Society}} (\bibinfo{year}{1973}), \bibinfo{pages}{617--631}.
\newblock


\bibitem[Hougaard et~al\mbox{.}(2010)]%
        {hougaard2010decentralized}
\bibfield{author}{\bibinfo{person}{Jens~Leth Hougaard},
  \bibinfo{person}{Herv{\'e} Moulin}, {and} \bibinfo{person}{Lars~Peter
  {\O}sterdal}.} \bibinfo{year}{2010}\natexlab{}.
\newblock \showarticletitle{Decentralized pricing in minimum cost spanning
  trees}.
\newblock \bibinfo{journal}{\emph{Economic Theory}} \bibinfo{volume}{44},
  \bibinfo{number}{2} (\bibinfo{year}{2010}), \bibinfo{pages}{293--306}.
\newblock


\bibitem[Kar(2002)]%
        {kar2002axiomatization}
\bibfield{author}{\bibinfo{person}{Anirban Kar}.}
  \bibinfo{year}{2002}\natexlab{}.
\newblock \showarticletitle{Axiomatization of the Shapley value on minimum cost
  spanning tree games}.
\newblock \bibinfo{journal}{\emph{Games and Economic Behavior}}
  \bibinfo{volume}{38}, \bibinfo{number}{2} (\bibinfo{year}{2002}),
  \bibinfo{pages}{265--277}.
\newblock


\bibitem[Lorenzo and Lorenzo-Freire(2009)]%
        {lorenzo2009characterization}
\bibfield{author}{\bibinfo{person}{Leticia Lorenzo} {and}
  \bibinfo{person}{Silvia Lorenzo-Freire}.} \bibinfo{year}{2009}\natexlab{}.
\newblock \showarticletitle{A characterization of Kruskal sharing rules for
  minimum cost spanning tree problems}.
\newblock \bibinfo{journal}{\emph{International Journal of Game Theory}}
  \bibinfo{volume}{38}, \bibinfo{number}{1} (\bibinfo{year}{2009}),
  \bibinfo{pages}{107--126}.
\newblock


\bibitem[Norde(2019)]%
        {norde2019degree}
\bibfield{author}{\bibinfo{person}{Henk Norde}.}
  \bibinfo{year}{2019}\natexlab{}.
\newblock \showarticletitle{The degree and cost adjusted folk solution for
  minimum cost spanning tree games}.
\newblock \bibinfo{journal}{\emph{Games and Economic Behavior}}
  \bibinfo{volume}{113} (\bibinfo{year}{2019}), \bibinfo{pages}{734--742}.
\newblock


\bibitem[Norde et~al\mbox{.}(2004)]%
        {norde2004minimum}
\bibfield{author}{\bibinfo{person}{Henk Norde}, \bibinfo{person}{Stefano
  Moretti}, {and} \bibinfo{person}{Stef Tijs}.}
  \bibinfo{year}{2004}\natexlab{}.
\newblock \showarticletitle{Minimum cost spanning tree games and population
  monotonic allocation schemes}.
\newblock \bibinfo{journal}{\emph{European Journal of Operational Research}}
  \bibinfo{volume}{154}, \bibinfo{number}{1} (\bibinfo{year}{2004}),
  \bibinfo{pages}{84--97}.
\newblock


\bibitem[PENNA and VENTRE(2004)]%
        {penna2004sharing}
\bibfield{author}{\bibinfo{person}{Paolo PENNA} {and} \bibinfo{person}{Carmine
  VENTRE}.} \bibinfo{year}{2004}\natexlab{}.
\newblock \showarticletitle{Sharing the cost of multicast transmissions in
  wireless networks}.
\newblock \bibinfo{journal}{\emph{Lecture Notes in Computer Science}}
  (\bibinfo{year}{2004}), \bibinfo{pages}{255--266}.
\newblock


\bibitem[Prim(1957)]%
        {prim1957shortest}
\bibfield{author}{\bibinfo{person}{Robert~Clay Prim}.}
  \bibinfo{year}{1957}\natexlab{}.
\newblock \showarticletitle{Shortest connection networks and some
  generalizations}.
\newblock \bibinfo{journal}{\emph{The Bell System Technical Journal}}
  \bibinfo{volume}{36}, \bibinfo{number}{6} (\bibinfo{year}{1957}),
  \bibinfo{pages}{1389--1401}.
\newblock


\bibitem[Shapley(1953)]%
        {shapley1953value}
\bibfield{author}{\bibinfo{person}{Lloyd~S Shapley}.}
  \bibinfo{year}{1953}\natexlab{}.
\newblock \showarticletitle{A value for n-person games}.
\newblock \bibinfo{journal}{\emph{Contributions to the Theory of Games}}
  \bibinfo{volume}{2}, \bibinfo{number}{28} (\bibinfo{year}{1953}),
  \bibinfo{pages}{307--317}.
\newblock


\bibitem[Silvela and Portillo(2001)]%
        {silvela2001breadth}
\bibfield{author}{\bibinfo{person}{Jaime Silvela} {and} \bibinfo{person}{Javier
  Portillo}.} \bibinfo{year}{2001}\natexlab{}.
\newblock \showarticletitle{Breadth-first search and its application to image
  processing problems}.
\newblock \bibinfo{journal}{\emph{IEEE Transactions on Image Processing}}
  \bibinfo{volume}{10}, \bibinfo{number}{8} (\bibinfo{year}{2001}),
  \bibinfo{pages}{1194--1199}.
\newblock


\bibitem[Tijs et~al\mbox{.}(2006)]%
        {tijs2006obligation}
\bibfield{author}{\bibinfo{person}{Stef Tijs}, \bibinfo{person}{Rodica
  Branzei}, \bibinfo{person}{Stefano Moretti}, {and} \bibinfo{person}{Henk
  Norde}.} \bibinfo{year}{2006}\natexlab{}.
\newblock \showarticletitle{Obligation rules for minimum cost spanning tree
  situations and their monotonicity properties}.
\newblock \bibinfo{journal}{\emph{European Journal of Operational Research}}
  \bibinfo{volume}{175}, \bibinfo{number}{1} (\bibinfo{year}{2006}),
  \bibinfo{pages}{121--134}.
\newblock


\bibitem[Tijs and Driessen(1986)]%
        {tijs1986game}
\bibfield{author}{\bibinfo{person}{Stef~H Tijs} {and} \bibinfo{person}{Theo~SH
  Driessen}.} \bibinfo{year}{1986}\natexlab{}.
\newblock \showarticletitle{Game theory and cost allocation problems}.
\newblock \bibinfo{journal}{\emph{Management Science}} \bibinfo{volume}{32},
  \bibinfo{number}{8} (\bibinfo{year}{1986}), \bibinfo{pages}{1015--1028}.
\newblock


\bibitem[Todo and Yokoo(2020)]%
        {todo2020split}
\bibfield{author}{\bibinfo{person}{Taiki Todo} {and} \bibinfo{person}{Makoto
  Yokoo}.} \bibinfo{year}{2020}\natexlab{}.
\newblock \showarticletitle{Split manipulations in cost sharing of minimum cost
  spanning tree}.
\newblock In \bibinfo{booktitle}{\emph{Proceedings of the 2020 European
  Conference on Artifical Intelligence}}. \bibinfo{pages}{219--226}.
\newblock


\bibitem[Trudeau(2012)]%
        {trudeau2012new}
\bibfield{author}{\bibinfo{person}{Christian Trudeau}.}
  \bibinfo{year}{2012}\natexlab{}.
\newblock \showarticletitle{A new stable and more responsive cost sharing
  solution for minimum cost spanning tree problems}.
\newblock \bibinfo{journal}{\emph{Games and Economic Behavior}}
  \bibinfo{volume}{75}, \bibinfo{number}{1} (\bibinfo{year}{2012}),
  \bibinfo{pages}{402--412}.
\newblock


\bibitem[Trudeau and Vidal-Puga(2017)]%
        {trudeau2017set}
\bibfield{author}{\bibinfo{person}{Christian Trudeau} {and}
  \bibinfo{person}{Juan Vidal-Puga}.} \bibinfo{year}{2017}\natexlab{}.
\newblock \showarticletitle{On the set of extreme core allocations for minimal
  cost spanning tree problems}.
\newblock \bibinfo{journal}{\emph{Journal of Economic Theory}}
  \bibinfo{volume}{169} (\bibinfo{year}{2017}), \bibinfo{pages}{425--452}.
\newblock


\bibitem[Vickrey(1961)]%
        {vickrey1961counterspeculation}
\bibfield{author}{\bibinfo{person}{William Vickrey}.}
  \bibinfo{year}{1961}\natexlab{}.
\newblock \showarticletitle{Counterspeculation, auctions, and competitive
  sealed tenders}.
\newblock \bibinfo{journal}{\emph{The Journal of Finance}}
  \bibinfo{volume}{16}, \bibinfo{number}{1} (\bibinfo{year}{1961}),
  \bibinfo{pages}{8--37}.
\newblock


\bibitem[Zhang et~al\mbox{.}(2023)]%
        {10.5555/3545946.3598890}
\bibfield{author}{\bibinfo{person}{Tianyi Zhang}, \bibinfo{person}{Junyu
  Zhang}, \bibinfo{person}{Sizhe Gu}, {and} \bibinfo{person}{Dengji Zhao}.}
  \bibinfo{year}{2023}\natexlab{}.
\newblock \showarticletitle{Cost sharing under private valuation and connection
  control}. In \bibinfo{booktitle}{\emph{Proceedings of the 2023 International
  Conference on Autonomous Agents and Multiagent Systems}}.
  \bibinfo{pages}{2161–2169}.
\newblock


\end{thebibliography}

\end{document}